\newcommand{\specialcell}[2][c]{%
  \begin{tabular}[#1]{@{}c@{}}#2\end{tabular}}
\newcommand{\bvec}[1]{\boldsymbol{ #1 }}
\renewcommand{\hat}{\widehat}
\renewcommand{\tilde}{\widetilde}
\def\min{\qopname\relax n{min}}
\def\max{\qopname\relax n{max}}
\def\argmin{\qopname\relax n{argmin}}
\def\argmax{\qopname\relax n{argmax}}
\def\Pr{\qopname\relax n{\mathbf{Pr}}}
\def\X{\mathcal{X}}
\newcommand{\eat}[1]{}
\newenvironment{lp*}{\begin{equation*}  \begin{array}{lll}}{\end{array}\end{equation*}}
\newcommand{\ma}[1]{{\color{green}  [\text{Michael:} #1]}}
\title{Learning in Online Principal-Agent Interactions: The Power of Menus}
\author{
    Minbiao Han\textsuperscript{\rm 1},
    Michael Albert\textsuperscript{\rm 2},
    Haifeng Xu\textsuperscript{\rm 1}
}
\begin{document}

\maketitle

\begin{abstract}

We study a ubiquitous learning challenge in online principal-agent problems during which the principal learns the agent's private information from the agent's revealed preferences in historical interactions.
This paradigm includes important special cases such as pricing and contract design, which have been widely studied in recent literature.
However, existing work considers the case where the principal can only choose a single strategy at every round to interact with the agent and then observe the agent's revealed preference through their actions.
In this paper, we extend this line of study to allow the principal to offer a \emph{menu} of strategies to the agent and learn additionally from observing the agent's selection from the menu.
We provide a thorough investigation of several online principal-agent problem settings and characterize their sample complexities, accompanied by the corresponding algorithms we have developed.
We instantiate this paradigm to several important design problems --- including Stackelberg (security) games, contract design, and information design.
Finally, we also explore the connection between our findings and existing results about online learning in Stackelberg games, and we offer a solution that can overcome a key hard instance of \citet{peng2019learning}.
\end{abstract}

\section{Introduction}
Asymmetric information is a key friction in many economic interactions. 
When setting a price for an item, the seller typically does not know the exact value the buyer has for the item. 
In hiring and compensation decisions, the employer does not know the work ethic of the particular employee.
For a security professional, the specific value of different potential targets to an attacker may be unknown.
Note that this information asymmetry is distinct from the standard strategic considerations in these interactions. For example, even if the security professional knows the value of all targets, an attacker may choose to attack a low-value target in order to avoid any potential defenses. However, if the security professional does not know the value of the targets to the attacker, it is much less likely that she will be able to implement a successful defense.

Learning from interactions where there is a single \emph{principal} (referred to as she) who acts first and then an \emph{agent} (referred to as he) who responds, a class of problems denoted \emph{principal-agent problems} \cite{myerson1982optimal,gan2022optimal}, has generated significant interest in the computer science literature (e.g., \citealp{kleinberg2003value,letchford2009learning,haghtalab2022learning}).
However, existing work focuses on learning the agent's type by using carefully chosen strategies from the possible set of principal strategies.
In the pricing of a single item example, this would be the principal setting a price in every round of play to narrow down the agent's value for the item.
Specifically, the principal's strategy space consists of offering a fixed price, $p$, and a probability of allocating an item, $x$, to the agent who has an unknown valuation $v$, which is his type.
The agent can choose to purchase the item or not, providing information to the principal about the agent's type.
In this setting, the best strategy for the principal to learn the agent's type exactly requires $O(log(K))$ rounds, where $K$ is the number of possible agent types.



Often, in practice, a principal does not choose a single strategy but instead offers a \emph{menu} of strategies.
The agent then chooses among the strategies, potentially revealing something about their type.
Exploiting \emph{revealed preferences} through offering a menu of strategies is common in pricing.
For example, a retailer may bundle goods at different prices, or there may be different pricing based on the quantity purchased.
If the principal can offer a menu of possible strategies, can she learn significantly more efficiently?

To illustrate the type of results we demonstrate in the revealed preferences setting with menus, consider the pricing problem again.
In this setting, instead of posting a single price and allocation probability every round, the principal posts a menu of prices  $\{x, p(x)\}_{x \in [0,1]}$.
The agent then chooses a strategy, i.e. an $x$, from the menu, at which point the principal plays that strategy.
In this case, the agent's type can be learned in a single round.
Specifically, if the payment function $p(x)$ is strictly convex, then $x^* = \argmax_{x} v\cdot x - p(x)$ is uniquely determined by $v$, and $v = \nabla p(x^*)$.

In this work, we examine, for general principal-agent problems, the question of the sample complexity of identifying the agent's private type.
We provide a condition, which for most common classes of principal-agent problems is \emph{generic}, that allows the principal to identify the agent's type in a \emph{single round}.
Additionally, for settings under which the aforementioned condition does not hold, we demonstrate improvements using menus relative to state of the art single strategy results.
\subsection{Related Work}
Our setting is a generalized principal-agent problem \cite{myerson1982optimal,gan2022optimal}. Online learning in principal-agent problems has been well studied in the context of pricing \cite{kleinberg2003value,amin2013learning,mohri2014optimal,dawkins2021limits}, Stackelberg games \cite{letchford2009learning,haghtalab2022learning}, security games \cite{balcan2014learning,peng2019learning}, contract design problems \cite{ho2014adaptive,zhu2022sample}, and Bayesian persuasion \cite{castiglioni2020online,harris2023algorithmic}. There, the principal aims to learn the optimal strategy against an unknown agent through repeated interactions. The principal achieves this by posting a single principal strategy in each round and observing the corresponding response from the agent. Our results consider the setting where the principal can post (infinitely) many principal strategies. In Section \ref{sec:connection}, we provide direct comparisons to the learning environment of a single strategy, where we solve the previous hardness example by \cite{peng2019learning}.

Another area of relevant literature is the research on learning economic parameters from revealed preferences \cite{beigman2006learning,zadimoghaddam2012efficiently,balcan2014learning,dawkins2022first}. The main difference between this line of research and the previous research about online learning in principal-agent problems is that these works focus on learning the agent's utility parameter, instead of the optimal principal strategy. Similar to us, they also consider learning the agent's utility parameters. 
However, these previous works all focus on designing learning algorithms that post a single principal strategy at every round. In addition, we consider general principal-agent problems and instantiate the results to multiple specific problems, including the specific pricing problem considered in this literature.

More generally, our research aligns with the general field of learning from strategic data sources.  This line of inquiry has been explored in various contexts and domains, addressing diverse objectives such as spam filtering \cite{bruckner2011stackelberg}, classification under incentive-compatibility constraints \cite{zhang2021incentive}, and examining social implications \cite{akyol2016price,milli2019social}. Finally, going beyond learning, our work subscribes to the line of information elicitation, which involves gathering relevant information from individuals to learn efficient algorithms. This literature is extensive and spans various disciplines, such as Bayesian persuasion \cite{dughmi2017algorithmic,kamenica2019bayesian}, decision making \cite{savage1971elicitation,chen2011information}, strategic data collection \cite{goel2020personalized,kong2020information}, and reverse Stackelberg games \cite{groot2012reverse}. 
\section{Preliminaries}\label{sec:prelim}
We consider a generalized principal-agent problem \cite{myerson1982optimal,gan2022optimal}. 
The 
one-round static version of the problem can be characterized as a two-stage game with two players:
the principal  and the agent. The agent possesses a private type $\theta$, which is randomly drawn from a finite set $\Theta$ with a prior probability distribution $\mu \in \Delta(\Theta)$. The principal is unaware of the specific realization of $\theta$. Furthermore, the agent's action is indicated by $j$, which is beyond the direct control of the principal. Let $[n] = \{1, 2, \cdots n\}$ denote the action space of the agent. The utility functions of the principal and the agent are denoted as $U(\cdot)$ and $V(\cdot)$, respectively, where $U/V: \xxx \times [n] \rightarrow \rr$ and $\xxx \subseteq \rr^m$ denotes the strategy space of the principal. Throughout this paper, we assume that the agent's utility function $V(\cdot)$ is linear with respect to the principal strategy $\x \in \xxx$, that is, 
\begin{equation}\label{eq:linear_agent_utility} V(\x, j) = \langle \v_j, \x \rangle + c_j,\end{equation}
where \begin{small}$\langle, \rangle$\end{small} denotes the inner product, $\v_j$ and $c_j$ represent the agent's utility parameters when he takes action $j$.  We note that this assumption is widely adopted by the computer science literate studying principal-agent problems \cite{von2004leadership,dutting2019simple,gan2022optimal}. The utility information of distinct agent types is denoted by the superscript $\theta$, that is, $V^{\theta}(\x, j) = \langle \v_j^\theta, \x \rangle + c_j^\theta$.


Throughout, vectors are denoted by bold lowercase letters (e.g., $\x, \v_j$). The $i^{th}$ component of a vector is denoted in the subscript of a non-bold letter (e.g., $x_i, v_{j,i}$). In addition, we define a special relation between two vectors $\x, \y \in \rr^m$.
\begin{definition}
    Given $\x, \y \in \rr^m$, we denote $\x \parallel \y$ if there exists a scalar $\lambda  \in \rr$ such that $\x = \lambda \y$; otherwise, we denote $\x \nparallel \y$.
\end{definition}
\noindent Note that we always have $x \parallel y$ when $x,y \in \rr$. Next, let us now consider some specific instantiations of the above generalized principal-agent problem framework.

\paragraph{Stackelberg Games.} One widely adopted special case of generalized principal-agent problems is the celebrated model of Stackelberg games \cite{stackelberg1934marktform,von2004leadership,10.1145/3580507.3597680}. In such games, we commonly refer to the principal as the leader and the agent as the follower. The leader has $m$ available actions and the follower has $n$ available actions. The leader's strategy can be represented by a randomized strategy $\x \in \Delta^m = \{\x: \sum_{i \in [m]} x_i = 1\}$, where $x_i$ denotes the probability of playing each pure action $i$.  The follower's response is denoted by action $j\in [n]$. The leader/follower reward information is represented as matrix $L/F\in \rr^{m \times n}$, where each element $L_{i,j}/F_{i,j}$ represents the leader/follower's reward when the leader takes action $i$ and the follower plays action $j$. As a result, we have 
$V^\theta(\x, j) = \sum_{i \in [m]} x_i F^{\theta}_{i,j},$
where $F^\theta$ represents the follower type $\theta$'s corresponding reward matrix. 

\paragraph{Stackelberg Security Games.} The Stackelberg Security Games (SSGs) have been studied for security resource allocation problems in many scenarios such as the Federal Air Marshal Service, the US Coast Guard, and the wildlife protection \citep{tambe2011security,an2017stackelberg}. The principal, known as the defender, has $r$ resources to protect $n$ targets from the attacker. Thus, the principal's defense strategy can be represented by $\x \in \rr^n$ where each $x_t$ represents the probability that target $t \in [n]$ is protected by the defender and $\sum_t x_t \le r$. When a covered target $t \in [n]$ is attacked, the attacker gets penalty $P^a_t$ and the defender gets reward $R^d_t$; If an uncovered target is attacked, the attacker gets reward $R^a_t$ and the defender gets penalty $P^d_t$. We assume the reward is preferred to the penalty, i.e., $R^d_t>P^d_t$ and $R^a_t>P^a_t$ for all $t$. The attacker's utility function is  $ V^\theta(\x, t) =  x_t P^{a, \theta}_t + (1-x_t)R^{a, \theta}_t,$ where each type $\theta$ has a corresponding penalty/reward (i.e., $P^{a, \theta}_t/R^{a, \theta}_t$) for target $t$.
\paragraph{Contract Design Problems.} 
Another instance of the principal-agent problem is the Nobel prize-winning research on contract theory \cite{hart1986theory}. In a contract design problem, the agent has $n$ actions that the principal cannot directly observe or control. Instead, the principal observes the $m$ different possible outcomes and receives the reward $r_i \in \rr$ associated with each outcome $i \in [m]$. We denote $\r \in \rr^m$ as the reward vector for all outcomes. Each action of the agent $j \in [m]$ induces a certain distribution of outcomes $\p_j \in \Delta^m$ at cost $c_j$. The principal's strategy is to design a contract $\x \in \rr^m$ that specifies the payment based on each possible outcome to incentivize the agent to take certain actions. As a result, we have 
$V^\theta(\x, j) = \sum_{i \in [m]}x_i p^\theta_{j, i} - c^\theta_j,$
where each agent type $\theta$ has a corresponding  cost $c^\theta_j$ for playing action $j$ with a realization probability $\p^\theta_j$ over the $m$ outcomes. 

\paragraph{Information Acquisition Games.} 
In an information acquisition problem \cite{savage1971elicitation,10.1145/3490486.3538338,chen2023learning}, we consider a stochastic environment with a principal and an agent.  There exists a hidden state $w \in \Omega$ that influences the principal's utility but remains unknown to both the agent and the principal until the interaction ends. We denote $\Delta(\Omega)$ as a probabilistic belief over the hidden state. The principal initiates the process by offering a \textit{scoring rule} $S: \Delta(\Omega) \times \Omega \rightarrow \rr$ to the agent. The agent chooses an action $j \in [n]$ with a cost $c_j$ and receives an observation  $o \in O$ related to the hidden state $w$ with probability $\Pr(o, w|j)$.  They then provide a refined belief report $\sigma^o \in \Delta(\Omega)$ via the Bayesian rule (i.e., $\sigma^o(w) = \Pr(o, w|j) / \Pr(o|j)$), enabling the principal to make an informed decision. Finally, the hidden state $w$ is revealed, and the principal determines the agent's payment based on the scoring rule. As a result, we have $V^\theta(\x, j) = \sum_{w,o} S(\sigma^o, w){\textstyle\Pr^{\theta}}(w, o|j) - c^\theta_j,$ where $\sigma^o \in \Delta(\Omega)$, $w \in \Omega$, $\Pr^{\theta}(w, o|j)$ represents the agent $\theta$'s probability of observing $o$ if the hidden state is $w$ and the action taken is $j$, and $c^\theta_j$ represents the corresponding cost.

\section{When is a Single Round Sufficient?}\label{sec:single_query}
In this section, we demonstrate the feasibility of learning the agent's private type within a single round, given the following assumption on the agent's finite type set $\Theta$.


\begin{assumption}\label{assum:non_proportional}
    Given a finite-type set $\Theta$ and $\theta, \theta' \in \Theta$ such that $\theta \ne \theta'$, we assume that $\v^\theta_j \nparallel \v^{\theta'}_j$ for every action $j \in [n]$, where $\v^\theta_j = \nabla V^\theta(\x, j)$. 
\end{assumption}
\noindent Note that this assumption is one of nondegeneracy. 
To be precise, each principal-agent problem instance is specified by the principal and agent utility parameters. This can be thought of as a point in a high-dimensional space, one dimension for every parameter. As a result, we can think of every point in this high-dimensional space being a problem instance. The assumption holds with probability $1$ since the game instances that violate the assumption in this space form a zero-measure set.
Given this assumption, we are prepared to present the key findings of this section.

\begin{restatable}{restatethm}{constantQuery}\label{thm:constant_query}
Suppose $\Theta$ satisfies assumption \ref{assum:non_proportional}, 
we can construct a menu of principal strategies $\mmm \subseteq \xxx$ that learns the agent's private type $\theta \in \Theta$ in a single round.
\end{restatable}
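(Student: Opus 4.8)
The plan is to first reduce the agent's selection from a menu to a single clean optimization. When the principal offers a menu $\mmm\subseteq\xxx$, a rational agent of type $\theta$ picks the item maximizing his indirect utility $g^\theta(\x):=\max_{j\in[n]}\big(\langle\v_j^\theta,\x\rangle+c_j^\theta\big)$ and then best responds with the maximizing action $j^\star(\theta)$. Since the inner maximization over $j$ commutes with the outer maximization over $\x$, we have $\max_{\x\in\mmm} g^\theta(\x)=\max_{j}\big(h_\mmm(\v_j^\theta)+c_j^\theta\big)$, where $h_\mmm(\v)=\max_{\x\in\mmm}\langle\v,\x\rangle$ is the support function of $\mmm$, and the chosen item is exactly the maximizer of the single linear functional $\langle\v_{j^\star(\theta)}^\theta,\cdot\rangle$ over $\mmm$. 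So the whole problem reduces to choosing a menu shape on which each type's dominant gradient direction selects a recognizable point.

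For the construction I would take the menu to be the boundary of a smooth, strictly convex body $K$ contained in $\xxx$ — concretely a small Euclidean ball in the relative interior of $\xxx$, living in $\aff(\xxx)$. The key geometric fact is that the Gauss map of such a $K$ is a bijection: every boundary point has a unique outer normal direction, and every direction is the outer normal of a unique boundary point. Hence the maximizer of $\langle\v,\cdot\rangle$ over $\partial K$ is the unique point whose outer normal points along $\v$, and two directions select the same point only if they are positive multiples of one another. In the single-action case ($n=1$) this already closes the argument: with menu $\partial K$, type $\theta$'s chosen point has outer normal along $\v^\theta$, so (by strict convexity, i.e.\ Cauchy--Schwarz for a ball) two types select the same point only if $\v^\theta\parallel\v^{\theta'}$, which Assumption~\ref{assum:non_proportional} forbids; thus $\theta\mapsto\x^*(\theta)$ is injective and a single round suffices.

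For general $n$, by the reduction above type $\theta$'s chosen point $\x^*(\theta)$ is the $\partial K$-point whose normal is the best-action gradient $\v_{j^\star(\theta)}^\theta$, and the principal additionally observes the induced best response $j^\star(\theta)$, as is standard for the revealed-preference feedback in this literature. I would then prove injectivity of the full signal $\theta\mapsto\big(\x^*(\theta),j^\star(\theta)\big)$: if two distinct types produced the same signal, they would share a common action $j=j^\star(\theta)=j^\star(\theta')$ together with a common boundary point, forcing $\v_j^\theta\parallel\v_j^{\theta'}$ — precisely the per-action equality ruled out by Assumption~\ref{assum:non_proportional}. The main obstacle is exactly the interaction between the agent's freedom to switch actions and the per-action form of the assumption: without recording the action, two types could in principle land on the same menu point via different actions whose gradients happen to be parallel, a coincidence the stated assumption does not exclude. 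I would dispatch this gap by observing that such a cross-action alignment is non-generic — the instances with $\v_{j}^\theta\parallel\v_{j'}^{\theta'}$ for some $j\ne j'$ again form a zero-measure set — so that generically the chosen point alone already identifies the type, while in full generality recording the observed action makes the per-action Assumption~\ref{assum:non_proportional} sufficient. Two routine points remain to be verified: that ties in the inner $\max_j$ (agent indifference) arise only on a measure-zero set and can be broken, and that projecting the gradients onto $\aff(\xxx)$ preserves non-parallelism, which once more holds generically.
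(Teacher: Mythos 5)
Your proposal is correct, and it reaches the same endgame as the paper---a small ball as the menu, the chosen point revealing the direction of the best-action gradient, and the observed action plus Assumption~\ref{assum:non_proportional} giving injectivity of $\theta \mapsto (\x^*(\theta), j^\star(\theta))$---but it gets there by a genuinely different decomposition. The paper first pre-computes the arrangement of best-response regions of all types in $\Theta$, asserts the existence of an interior point $\hat{\x}$ around which every type's best response is locally constant, locks best responses by taking $\epsilon$ small, and only then solves the agent's fixed-action choice by KKT, obtaining $\x^*(\theta)=\hat{\x}+\sqrt{\epsilon}\,\v^\theta_j/\lVert\v^\theta_j\rVert_2$. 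Your commutation-of-maxima (support function) observation makes that locking step unnecessary: whatever action $j^\star(\theta)$ the agent plays, his chosen point maximizes $\langle \v^\theta_{j^\star(\theta)},\cdot\rangle$ over the sphere, so \emph{any} small ball works, the menu needs no knowledge of $\Theta$, and tie-breaking is irrelevant (any jointly optimal pair $(\x^*,j)$ already certifies that $\x^*$ maximizes action $j$'s linear utility over the menu). What the paper's route buys is the explicit KKT formula it reuses in Section~\ref{sec:infinite_type}; what yours buys is a universal menu and avoidance of the paper's unproved (though true, by finiteness of the arrangement) interior-point claim. Two of your caveats can be tightened: the cross-action collision is fully resolved by recording the action, exactly as the paper does, so no genericity appeal is needed; and the projection-onto-$\aff(\xxx)$ issue is moot in the regime the theorem covers, since the paper's own construction requires a full-dimensional ball $\mmm\subseteq\xxx$---when the effective dimension collapses, both proofs fail for the same reason, which is precisely Proposition~\ref{prop:principal_two_actions}.
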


\begin{proof}
For every agent type $\theta$, we can always separate the principal's strategy space $\xxx$ into at most $n$ sub-regions, where each sub-region consists of principal strategies that have the same agent best response (i.e., the agent response that maximizes his utility under the principal's strategy).

\begin{figure}[tbh]
    \centering
    \includegraphics[width=0.45\textwidth]{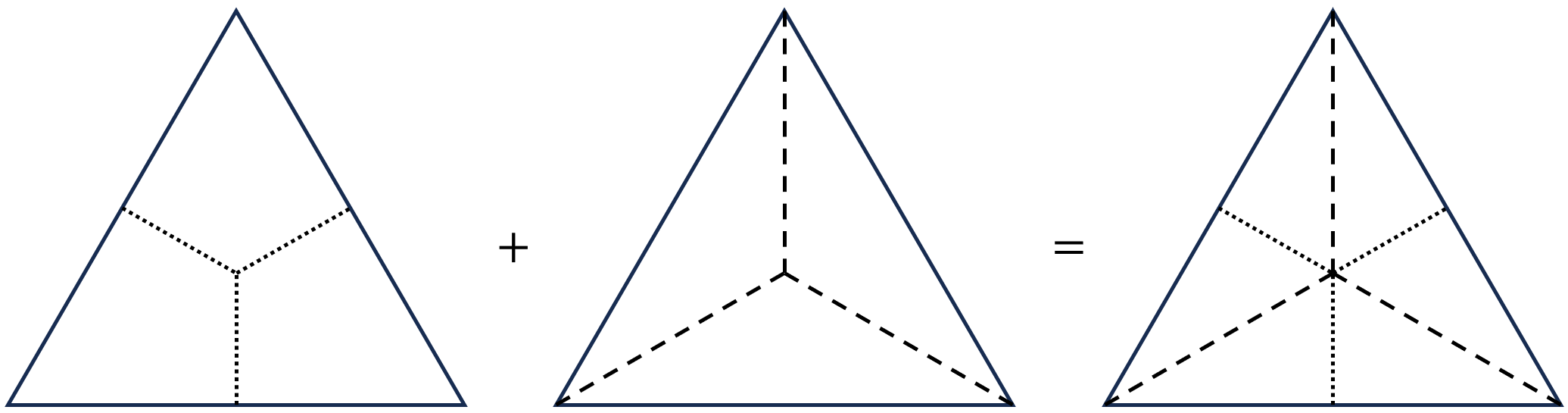}
    \caption{An example where $|\Theta|=2$, $\xxx \subseteq \rr^2$, and $n=3$. For each specific agent type, the principal's strategy space can be divided into three sub-regions, where the agent's best response action remains the same at each sub-region. In this example, there exist six sub-regions where \textit{all} agent types' best responses are fixed for  principal strategies in each sub-region, though those fixed best responses could be different across different agent types.}
    \label{fig:feasible_region}
\end{figure}
\vspace{-0.151mm}
Given a finite set $\Theta$, we can pre-compute the separating hyper-planes from all types as in the example in Figure \ref{fig:feasible_region}.
There always exists a principal strategy $\hat{\x}$ that is an \textit{interior} point such that all agent types' best responses are fixed around this principal strategy.
Then the principal can post a menu of strategies centered at $\hat{\x}$ with a small radius $\epsilon$ (i.e., $\mmm = \{\x: |\x - \hat{\x}|^2\leq \epsilon \} \subset \xxx$) such that all $\x \in \mmm$ have the same best response for every agent type. 
Such a menu $\mmm$ always exists given a small enough $\epsilon$ and that $\hat{\x}$ is the \textit{interior} point of some sub-region.
Suppose the agent chooses $\x^* \in \mmm$ and responds with action $j$.

First, we can exclude those agent types whose best response with respect to $\hat{\x}$ is not $j$. As for the remaining possible types, we can simulate the agent's choice of $\x^* \in \mmm$ through the following optimization program.
\begin{equation}\label{op:simplex_principal_specific_response}
\begin{split}
    \textstyle \max_{\x \in \mmm} \quad &V^\theta(\x, j)  \\
     \text{s.t.}  \quad & \textstyle \sum_i (x_i - \hat{x}_i)^2 \le \epsilon.
\end{split}
\end{equation}
Let us write the Lagrangian function of program \eqref{op:simplex_principal_specific_response} as $\lll(\x, \lambda, \mu) = V^\theta(\x, j) + \lambda \big(\epsilon - \sum_i (x_i - \hat{x}_i)^2\big) $ 
and its corresponding Karush–Kuhn–Tucker (KKT) conditions:
{\allowdisplaybreaks
\begin{align*}
        &v^\theta_{j,i} - 2\lambda x^*_i + 2\lambda \hat{x}_i = 0, \quad \forall i \in [m]\\
        & \lambda(\epsilon - \textstyle \sum_i (x^*_i-\hat{x}_i)^2) = 0,\\
        & \epsilon \ge \textstyle  \sum_i (x^*_i-\hat{x}_i)^2, \\
        & \x^*, \lambda \ge 0.
\end{align*}}
By solving the above KKT conditions, we have $x^*_i(\theta) = \frac{\sqrt{\epsilon} v^\theta_{j,i}}{\|\v^\theta_j\|_2} + \hat{x}_i, \text{ for all } i$. 
According to assumption \ref{assum:non_proportional}, it follows that $\v^\theta_j \nparallel \v^{\theta'}_j$, resulting in $\x^*(\theta) \ne \x^*(\theta')$ for all $\theta, \theta' \in \Theta$ and $\theta \ne \theta'$. 
As a result, we can learn the agent's type $\theta \in \Theta$ in a single round since each agent type $\theta$ has a unique choice of $\x^*(\theta)$ given our menu of principal strategies $\mmm$.
\end{proof}
We remark that assumption \ref{assum:non_proportional} does not hold when $m=1$, since we always have $v^\theta_j \parallel v^{\theta'}_j$ when $v^{\theta}_j, v^{\theta'}_j \in \rr$.
Thus, the menu of principal strategies $\mmm$ we constructed for Theorem \ref{thm:constant_query} does not hold for $m = 1$. 
Specifically, we have $x^*(\theta) = \sqrt{\epsilon} v^\theta_j/|v^\theta_j| + \hat{x} \in \{\hat{x} +\sqrt{\epsilon}, \hat{x}- \sqrt{\epsilon}\}$, which does not depend on $\theta$, when $m=1$. As a result, when $|\Theta| > 2$, there must exist two agent types that share the same preference for $x^*$ when presented with the given menu of strategies $\mmm$, because $x^*(\theta)$ only has two possible choices. 
In Section \ref{sec:finite_log}, we will introduce effective algorithms that address this challenge and facilitate the learning of the agent's type specifically in scenarios where the principal action space is limited to one dimension. Next, we apply the generalized principal-agent framework to specific principal-agent problems.

\subsection{Stackelberg Games}\label{subsec:stackelberg}

Our first instantiation is the  Stackelberg game, where we show a corollary of Theorem \ref{thm:constant_query}. 
According to the definition of follower utility in a Stackelberg game, we have $v^\theta_{j, i} = \nabla V^\theta_i(\hat{\x}, j) = F^\theta_{i, j}$. 
\begin{corollary}\label{corollary:stackelberg}
    Suppose $\Theta$ satisfies assumption \ref{assum:non_proportional}, 
    we can construct a menu of leader strategies $\mmm \subseteq \xxx$ that learns the follower's private type $\theta \in \Theta$ in a single round.
\end{corollary}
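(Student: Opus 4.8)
The plan is to read the Stackelberg game off as a literal instance of the generalized principal-agent problem and then invoke Theorem~\ref{thm:constant_query}. The follower's utility $V^\theta(\x,j)=\sum_{i\in[m]}x_i F^\theta_{i,j}$ is linear in $\x$ and matches the form of Equation~\eqref{eq:linear_agent_utility} with gradient $\v^\theta_j=\nabla V^\theta(\x,j)=(F^\theta_{1,j},\dots,F^\theta_{m,j})$ --- the $j$-th column of the reward matrix $F^\theta$ --- and constant term $c_j=0$. On this instance Assumption~\ref{assum:non_proportional} simply says that for every $j$ and every pair $\theta\ne\theta'$ the $j$-th columns of $F^\theta$ and $F^{\theta'}$ are not scalar multiples of each other, so the construction in the proof of Theorem~\ref{thm:constant_query} should hand us a menu $\mmm\subseteq\xxx$ on which each type's favorite point $\x^*(\theta)$ is distinct, identifying $\theta$ in a single round.

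The step I expect to be the main obstacle is that the leader's strategy space is the simplex $\Delta^m$, not a full-dimensional subset of $\rr^m$. Since $\Delta^m$ lies in the affine hyperplane $\{\x:\sum_i x_i=1\}$, the Euclidean ball $\{\x:\norm{\x-\hat{\x}}^2\le\epsilon\}$ used as the menu in Theorem~\ref{thm:constant_query} is $m$-dimensional and does not fit inside $\Delta^m$. I would fix this by choosing $\hat{\x}$ in the relative interior of a sub-region and replacing the ambient ball with its intersection with the hyperplane --- equivalently, reparametrizing $\Delta^m$ by the $m-1$ free coordinates $x_1,\dots,x_{m-1}$ with $x_m=1-\sum_{i<m}x_i$.

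In these coordinates the follower utility becomes $V^\theta=F^\theta_{m,j}+\sum_{i<m}x_i\,(F^\theta_{i,j}-F^\theta_{m,j})$, so the relevant gradient is the projected vector $\tilde{\v}^\theta_j$ with entries $F^\theta_{i,j}-F^\theta_{m,j}$. Rerunning the KKT computation of Theorem~\ref{thm:constant_query} verbatim then shows the chosen menu point leaves $\hat{\x}$ in the direction $\tilde{\v}^\theta_j/\norm{\tilde{\v}^\theta_j}$, so $\x^*(\theta)\ne\x^*(\theta')$ precisely when $\tilde{\v}^\theta_j\nparallel\tilde{\v}^{\theta'}_j$. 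This projected non-degeneracy is still generic (it fails only on a measure-zero set), and to be fully rigorous I would state the corollary's hypothesis in this projected form rather than on the raw columns. I would also flag the edge case $m=2$: there $\tilde{\v}^\theta_j\in\rr$ is a scalar, two scalars are always parallel, and the construction degenerates exactly as in the $m=1$ remark following Theorem~\ref{thm:constant_query}, so the single-round guarantee genuinely requires $m\ge 3$ leader actions.
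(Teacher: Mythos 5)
Your proposal is correct and follows the same route as the paper: identify $\v^\theta_j$ with the $j$-th column of $F^\theta$ and invoke Theorem~\ref{thm:constant_query}; the paper's own proof of Corollary~\ref{corollary:stackelberg} is exactly this one-line reduction, plus a remark that the construction collapses when $m=2$ (Proposition~\ref{prop:principal_two_actions} shows single-round learning is then impossible), which matches your final caveat. Where you go beyond the paper is in handling the simplex constraint: the proof of Theorem~\ref{thm:constant_query} uses a full-dimensional Euclidean ball as the menu, which indeed cannot sit inside $\Delta^m$, and the paper never addresses this for $m \ge 3$ --- it performs the reduction to free coordinates only in its $m=2$ discussion. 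Your reparametrization by $x_1,\dots,x_{m-1}$, the resulting projected gradients $\tilde{\v}^\theta_j$ with entries $F^\theta_{i,j}-F^\theta_{m,j}$, and especially your observation that the non-degeneracy hypothesis must be imposed on these projected vectors are a genuine repair of a gap the paper glosses over, not a stylistic difference: non-parallelism of the raw columns does not suffice, since two columns differing by a constant multiple of $\one$ are non-parallel yet induce identical selections from any menu lying in the hyperplane $\sum_i x_i = 1$, so such types would be indistinguishable under the paper's literal assumption. Your projected condition is still generic, so the corollary survives with the corrected hypothesis, exactly as you state.
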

The proof is straightforward based on the proof of Theorem \ref{thm:constant_query}. However, a special aspect of Stackelberg games is that when $m = 2$, the leader's strategy space is $\xxx = \{\x: x_1 + x_2 = 1\} \subseteq \rr$ and the follower's utility function is $v^\theta(\x, j) = (v^\theta_{j, 1} - v^\theta_{j,2}) x_1 + v^\theta_{j,2}$. In other words, the actual dimension of the principal's strategy space is $1$ when $m=2$ in a Stackelberg game. In this case, Theorem \ref{thm:constant_query} does not hold as we discussed before, and it is not possible to learn the follower's private type within a single round. 

\begin{restatable}[Impossibility of Single-round Learning]{restateprop}{impossibleLeader}\label{prop:principal_two_actions}
There exists a finite set of follower types $\Theta$ such that it is impossible to learn the follower's type within a single round when the leader has $2$ actions and $|\Theta| > 2$.
\end{restatable}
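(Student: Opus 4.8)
The plan is to exploit the collapse of the leader's strategy space to a single effective dimension when $m=2$, and to exhibit at least three distinct follower types that produce the same revealed preference over \emph{every} menu. First I would reparametrize: since $x_1 + x_2 = 1$, set $t = x_1 \in [0,1]$ and write the follower's utility as $V^\theta(t,j) = \alpha^\theta_j t + \beta^\theta_j$ with $\alpha^\theta_j = v^\theta_{j,1} - v^\theta_{j,2}$ and $\beta^\theta_j = v^\theta_{j,2}$. A menu is then a (compact) subset $\mmm \subseteq [0,1]$; the follower's value from a menu point is the convex, piecewise-linear function $g^\theta(t) = \max_j V^\theta(t,j)$; and he selects $t^*(\theta) \in \argmax_{t \in \mmm} g^\theta(t)$ and plays the induced best response $j^*(\theta)$. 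Learning the type in one round means the map $\theta \mapsto (t^*(\theta), j^*(\theta))$ is injective for the chosen menu, so impossibility amounts to showing that for every menu two types yield the same observed pair.

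Then the construction: pick three distinct types for which $g^\theta$ is affine and strictly increasing in $t$, and for which one fixed action, say action $1$, is the unique best response throughout $[0,1]$. Concretely I would take the gradient columns $\v^\theta_1 = (v^\theta_{1,1}, v^\theta_{1,2})$ to be $(2,1), (3,1), (4,1)$, which are pairwise non-proportional (consistent with Assumption \ref{assum:non_proportional}) and all have positive slope $\alpha^\theta_1 > 0$; the remaining actions are given small, generically chosen payoffs so they are dominated on $[0,1]$. The key observation is that the follower's selection from any menu depends only on his \emph{preference order} over $[0,1]$: a strictly increasing $g^\theta$ ranks points identically to the natural order, independent of the magnitudes of $\alpha^\theta$ and $\beta^\theta$. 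Hence for every menu $\mmm$ all three types select the same point $t^*(\theta) = \max \mmm$, and since action $1$ is dominant for all of them, all play $j^*(\theta) = 1$. The observed pair $(\max\mmm, 1)$ is thus identical across the three types, so no menu distinguishes them; since $|\Theta| = 3 > 2$, single-round learning is impossible.

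The step I expect to be the main obstacle is making the argument genuinely \emph{uniform} over all menus, rather than just the Euclidean ball used in the Remark after Theorem \ref{thm:constant_query}. The reason it goes through is structural: in one dimension the value function is affine, so only the \emph{sign} of its slope, not its magnitude, governs the maximizer over an arbitrary set; this is exactly the ``two possible values'' phenomenon noted after Theorem \ref{thm:constant_query}, and it caps the number of distinguishable classes at two (the top element for increasing types, the bottom element for decreasing types), so any three same-slope-sign types must collide by pigeonhole. I would also dispatch two technical points: (i) restricting to compact menus (or arguing via suprema) so that the argmax exists and $t^*(\theta)$ is well defined, which is harmless and matches the compact ball menu already in use; and (ii) ensuring the principal gains nothing from additionally observing the action $j^*$, which is precisely why the construction forces all three types to share the same dominant action. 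Finally I would verify that the chosen parameters form a legitimate Stackelberg instance satisfying Assumption \ref{assum:non_proportional}, confirming that the obstruction is the one-dimensionality of the strategy space when $m=2$, not a violation of the nondegeneracy hypothesis.
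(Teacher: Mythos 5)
Your proof is correct, and it shares the paper's core one-dimensional insight (with $m=2$ the follower's value is a convex, here affine, function of a single effective variable, so his menu choice is pinned to an extreme point), but you take a genuinely different and in one respect more complete route. The paper argues by contradiction for an arbitrary $\Theta$: since $V^{\theta}(x)=\max_j V^{\theta}(x,j)$ is piecewise-linear convex, $\argmax_{x\in\mmm}V^{\theta}(x)\in\{\min(\mmm),\max(\mmm)\}$ for any menu $\mmm$, so more than two types cannot make pairwise distinct menu choices. That argument, however, only addresses the \emph{chosen menu item}; in a single round the principal also observes the follower's best-response action, and two types selecting the same point could in principle still be separated by responding with different actions (exactly the device the paper itself uses later, in Theorem \ref{thm:logk_without_agent_response}, to break ties). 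Since the proposition is existential, this gap is fixable, and your construction fixes it explicitly: taking three types with gradient columns $(2,1),(3,1),(4,1)$, all with strictly increasing value functions and a common strictly dominant action, the \emph{entire} observation $(\max\mmm,\,1)$ --- chosen strategy and response action --- is identical across the three types for every compact menu, so no single-round scheme can distinguish them. Your construction also sidesteps tie-breaking subtleties (the argmax is unique) and verifies Assumption \ref{assum:non_proportional}, confirming the obstruction is the collapse to one effective dimension rather than degeneracy. What the paper's argument buys is generality --- the failure of injective menu choices holds for \emph{every} $\Theta$ with $|\Theta|>2$; what yours buys is a self-contained witness for the existential claim that is robust to the principal exploiting both pieces of single-round feedback.
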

\begin{proof}
We prove this by contradiction. Suppose the agent has $K$ possible types (i.e., $\Theta = \{\theta_1,\cdots, \theta_K\}$) and we can construct a menu of leader strategies $\mmm$ such that there exists $K$ unique $\{x^{\theta_k}\}_{\theta_k \in \Theta}$ where $x^{\theta_k} = \argmax_{\x \in \mmm} V^{\theta_k}(\x)$. 
The agent's utility function $V^{\theta_k}(x) = \max_j V^{\theta_k}(x, j)$ is a piece-wise linear strictly convex function. By convexity, we have $\argmax_{x \in \mmm} V^{\theta_k}(x) \in \{\min(\mmm), \max(\mmm)\}$, for all type $\theta_k$, contradicting the fact there exists $k>2$ unique choices of all follower types.
\end{proof}

\subsection{Contract Design Problems}\label{subsec:contract}
We also consider the special contract design problem where $v^\theta_{j, i} = \nabla V^\theta_i(\x, j) = p^\theta_{j, i}$, which represents the probability to achieve outcome $i \in [m]$ when the agent takes action $j \in [n]$. 
Next, we apply Theorem \ref{thm:constant_query} and show a corollary result for the contract design problem. 
\begin{corollary}\label{corollary:contract}
    Suppose $\Theta$ satisfies assumption \ref{assum:non_proportional}, 
    we can construct a menu of principal contracts $\mmm \subseteq \xxx$ that learns the agent's private type $\theta \in \Theta$ in a single round.
\end{corollary}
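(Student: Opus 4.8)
The plan is to show that the contract design problem is a direct instance of the linear-utility framework of equation \eqref{eq:linear_agent_utility} and then invoke Theorem \ref{thm:constant_query} essentially verbatim. First I would rewrite the agent's utility as $V^\theta(\x,j) = \sum_{i} x_i p^\theta_{j,i} - c^\theta_j = \langle \p^\theta_j, \x \rangle - c^\theta_j$, which is linear in the contract $\x \in \rr^m$ with gradient $\v^\theta_j = \nabla_\x V^\theta(\x,j) = \p^\theta_j$. Thus the per-action gradient vectors are exactly the outcome distributions $\p^\theta_j \in \Delta^m$, confirming the identification $v^\theta_{j,i} = p^\theta_{j,i}$ stated just before the corollary, and placing contract design squarely inside the setting of Theorem \ref{thm:constant_query} with $\v_j = \p^\theta_j$ and $c_j = -c^\theta_j$.

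Next I would check that Assumption \ref{assum:non_proportional} is not only meaningful but in fact mild in this setting. Since each $\p^\theta_j$ lies in the simplex $\Delta^m$, if $\p^\theta_j \parallel \p^{\theta'}_j$, i.e. $\p^\theta_j = \lambda \p^{\theta'}_j$ for some scalar $\lambda$, then summing coordinates gives $1 = \lambda \cdot 1$, so $\lambda = 1$ and the two distributions coincide. Hence for contract design the non-proportionality requirement $\v^\theta_j \nparallel \v^{\theta'}_j$ is \emph{equivalent} to the natural condition that distinct types induce distinct outcome distributions for every action $j$; Assumption \ref{assum:non_proportional} is therefore just the statement that the types are distinguishable through their induced distributions, which is the weakest one could hope for.

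Then the construction and KKT analysis of Theorem \ref{thm:constant_query} transfer directly: pick an interior point $\hat{\x}$ of a region on which every type's best response is fixed, offer the $\epsilon$-ball menu $\mmm = \{\x : \sum_i (x_i - \hat{x}_i)^2 \le \epsilon\}$, and for the type whose best response equals the observed action $j$, KKT stationarity yields $x^*_i(\theta) = \sqrt{\epsilon}\, p^\theta_{j,i}/\|\p^\theta_j\|_2 + \hat{x}_i$. Because the $\p^\theta_j$ are pairwise non-parallel across types, the recovered maximizers $\x^*(\theta)$ are pairwise distinct, so the observed selection from $\mmm$ pins down $\theta$ in a single round.

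The only place needing care beyond citing Theorem \ref{thm:constant_query} is dimensional and feasibility-related. The argument requires the contract dimension to satisfy $m \ge 2$ outcomes; with a single outcome the gradients are scalars, are always parallel, and the menu collapses, matching the $m=1$ failure already noted after Theorem \ref{thm:constant_query}. I would also confirm that an interior point $\hat{\x}$ with all best responses fixed exists inside the feasible contract set (under any nonnegativity or limited-liability constraints on payments), so that a sufficiently small $\epsilon$-ball remains feasible; this follows from the full-dimensionality of the best-response regions in $\rr^m$ exactly as in the proof of Theorem \ref{thm:constant_query}. I expect this feasibility check, rather than any new computation, to be the main (and only modest) obstacle.
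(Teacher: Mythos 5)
Your proposal is correct and follows essentially the same route as the paper: the paper's proof of Corollary \ref{corollary:contract} is simply the identification $v^\theta_{j,i} = \nabla V^\theta_i(\x, j) = p^\theta_{j,i}$ followed by a direct invocation of Theorem \ref{thm:constant_query}, which is exactly your argument. Your additional observation that, because each $\p^\theta_j$ lies in $\Delta^m$, the condition $\p^\theta_j \parallel \p^{\theta'}_j$ forces $\lambda = 1$ and hence Assumption \ref{assum:non_proportional} is equivalent to distinct types inducing distinct outcome distributions for every action, is a correct and worthwhile refinement that the paper does not state.
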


\subsection{Information Acquisition Games}
Another special case of the generalized principal-agent problem is the information acquisition problem. 
From the definition of  agent utility function, we have $v^\theta_{j, i} = \nabla V^\theta_i(\x, j) = \Pr^{\theta}(w, o|j)$ where $i$ denotes a pair of variables  $(w, o), w \in \Omega, o \in O$. Thus, $\v^\theta_j$ in the information acquisition game represents a vector of length $|\Omega| \times |O|$. We can also apply Theorem \ref{thm:constant_query} for the following corollary.

\begin{corollary}\label{cor:info_acquisition}
    Suppose $\Theta$ satisfies assumption \ref{assum:non_proportional}, 
    we can construct a menu of principal scoring rules $\mmm \subseteq \xxx$ that learns the agent's private type $\theta \in \Theta$ in a single round.
\end{corollary}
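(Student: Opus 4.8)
The plan is to treat this corollary as a direct specialization of Theorem \ref{thm:constant_query}, so the only real work is to verify that the information-acquisition utility fits the linear-in-$\x$ framework the theorem requires and then invoke its construction verbatim. First I would make the principal's strategy space explicitly finite-dimensional by reading the scoring rule as a vector: index the coordinates $i$ by pairs $(w,o) \in \Omega \times O$ and set $x_i = S(\sigma^o, w)$. Then the agent's utility $V^\theta(\x, j) = \sum_{w,o} S(\sigma^o, w)\Pr^\theta(w,o|j) - c^\theta_j$ takes the form $\langle \v^\theta_j, \x\rangle + c_j$ of \eqref{eq:linear_agent_utility}, with $v^\theta_{j,i} = \Pr^\theta(w,o|j)$ and $c_j = -c^\theta_j$. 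This identifies $\v^\theta_j = \nabla V^\theta(\x,j)$ as the $(|\Omega|\cdot|O|)$-dimensional vector of outcome probabilities, exactly as stated in the paragraph preceding the corollary.

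With the gradient pinned down, I would note that Assumption \ref{assum:non_proportional} is precisely the hypothesis of the corollary: for distinct types $\theta \ne \theta'$ and every action $j$, the probability vectors satisfy $\v^\theta_j \nparallel \v^{\theta'}_j$. The construction from the proof of Theorem \ref{thm:constant_query} then applies without change: precompute the best-response separating hyperplanes across all types, select an \emph{interior} point $\hat{\x}$ at which every type's best response is locally constant, and post the menu $\mmm = \{\x : \|\x - \hat{\x}\|^2 \le \epsilon\}$. Solving the same KKT conditions yields $x^*_i(\theta) = \sqrt{\epsilon}\, v^\theta_{j,i}/\|\v^\theta_j\|_2 + \hat{x}_i$, and non-proportionality of the $\v^\theta_j$ forces $\x^*(\theta) \ne \x^*(\theta')$, so the agent's revealed choice from $\mmm$ identifies $\theta$ in one round.

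I expect the main obstacle to be justifying the reduction to a finite-dimensional and sufficiently high-dimensional strategy space, rather than any algebra. Two points need care. First, the report $\sigma^o$ appearing inside $S(\sigma^o,w)$ is in principle type-dependent through the Bayesian posterior, so I must argue that the relevant payments can still be treated as free design coordinates indexed by $(w,o)$; this is legitimate because the principal designs $S$ as a lookup over reports and the gradient of $V^\theta$ with respect to these payments is the fixed weight $\Pr^\theta(w,o|j)$. Second, exactly as in the Stackelberg case behind Proposition \ref{prop:principal_two_actions}, the construction needs the effective dimension $|\Omega|\cdot|O|$ to be at least $2$; when $|\Omega|\cdot|O| = 1$ the scalar degeneracy discussed after Theorem \ref{thm:constant_query} reappears and single-round identification fails for $|\Theta| > 2$. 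Since both conditions hold generically, the reduction goes through and the corollary follows.
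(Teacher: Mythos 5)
Your proposal is correct and follows essentially the same route as the paper: the paper likewise identifies the scoring-rule payments as strategy coordinates indexed by pairs $(w,o)$, reads off the gradient $v^\theta_{j,i} = \Pr^\theta(w,o|j)$ as a vector of length $|\Omega|\cdot|O|$, and then applies Theorem \ref{thm:constant_query} directly. Your two added caveats (treating $S$ as a lookup table so the payments are free coordinates, and the effective-dimension requirement) are reasonable refinements the paper leaves implicit, but they do not change the argument.
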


\subsection{Stackelberg Security Games}\label{subsec:security_games} 
Finally, we consider the specific Stackelberg security game.
Recall that the attacker's utility to attack target $j$ only depends on the defender's coverage probability of protecting target $j$, that is, \begin{equation}\label{eq:security_attacker_utility}\textstyle V^\theta(\x, j) = P^{a,\theta}_{j} x_j + (1-x_j) R^{a,\theta}_j.\end{equation}
As a result, the actual dimension of the defender's strategy space in $V^\theta(\x, j)$ is $1$ (i.e., $x_j$ the probability of defending target $j$), and the result from Theorem $\ref{thm:constant_query}$ cannot be applied. 
Intuitively, when presented with a menu of defender strategies, all agent types would favor the one with the lowest protection probability for target $j$, given that their best response with respect to all strategies from this menu is fixed to attack this particular target $j$.
The underlying reason is $v^\theta_{j, i} = \nabla V^\theta(\x, j) = P^{a,\theta}_{j} - R^{a,\theta}_j \in \rr$. As a result, assumption \ref{assum:non_proportional} cannot hold for more than two attacker types in a security game, and Theorem \ref{thm:constant_query} is also not applicable.


\section{When is a  Small Number  of Rounds Sufficient?}\label{sec:finite_log}
As discussed, our single-round results do not hold when the effective dimensionality of the principal's strategy space is $1$ (see Proposition \ref{prop:principal_two_actions}). In the following section, we introduce algorithms designed to learn the private agent type efficiently with a small number of rounds in this particular scenario, i.e., $\xxx \subseteq \rr$. For the sake of simplicity, we can assume that $\xxx = [0, 1]$, as the principal's strategy can always be rescaled within this range without loss of generality.

First, we present an algorithm for the principal to learn the agent's type within $\log |\Theta|$ rounds under the following assumption on the finite type set $\Theta$. 
\begin{assumption}\label{assum:menu}
    Given a finite type set $\Theta$, we assume for every type $\theta \in \Theta$, there does not exist $j \in [n]$ such that
    \begin{equation}
        V^\theta(\x, j) \ge V^{\theta}(\x, j'), \quad \forall \x, j' \ne j
    \end{equation}
\end{assumption}
\noindent Assumption \ref{assum:menu} is a statement that no agent type has a  dominant action.
This ensures there is sufficient variation in the agent's behavior for the principal to learn the agent's type.



\begin{restatable}{restatethm}{smallQueryMenu}\label{thm:logk_without_agent_response}
When $m=1$ and $\Theta$ satisfies assumption \ref{assum:menu}, there exists an algorithm to learn the agent's type in $\log |\Theta|$ rounds where the principal posts a menu of principal strategies per round.
\end{restatable}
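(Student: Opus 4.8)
The plan is to run a binary search over the candidate type set, exploiting the one dimensional convex structure already used for Proposition~\ref{prop:principal_two_actions}. Since $m=1$ and $\xxx=[0,1]$, each type's best response utility $V^\theta(x)=\max_{j}(v^\theta_j x+c^\theta_j)$ is a piecewise linear convex function of $x$, so its maximizer over any menu is attained at an extreme point of that menu. Consequently a menu effectively reduces to a pair of points $\{a,b\}$ with $a<b$, and the agent's revealed preference tells us the sign of $V^\theta(b)-V^\theta(a)$ together with the best response action played at the chosen endpoint. The entire algorithm can be designed offline, because the principal knows every type's parameters $\{(v^\theta_j,c^\theta_j)\}$; what remains is to show that these binary (point plus action) queries can be arranged into a balanced decision tree of depth $\lceil\log_2|\Theta|\rceil$.

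First I would maintain a candidate set $S\subseteq\Theta$, initialized to $\Theta$, and in each round post a two point menu whose response partitions $S$ into two subsets, each of size at most $\lceil|S|/2\rceil$; after $\lceil\log_2|\Theta|\rceil$ rounds $S$ is a singleton and the type is identified. The heart of the proof is therefore a balanced split lemma: for any $S$ with $|S|\ge 2$ there is a menu whose response splits $S$ roughly in half. I would establish existence by a sweeping argument. Consider the family of infinitesimal menus $\{c,c+\delta\}$ indexed by the location $c\in[0,1]$; the agent prefers the right endpoint exactly when the right derivative of $V^\theta$ at $c$ is positive, and by convexity each type's right derivative is nondecreasing in $c$. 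Hence, as $c$ sweeps from $0$ to $1$, types switch from \emph{left-preferring} to \emph{right-preferring} one at a time and never switch back, so the number of right-preferring types is a nondecreasing integer valued function of $c$ that changes by one at each crossing. By a discrete intermediate value argument it attains every value in its range, in particular $\lceil|S|/2\rceil$ whenever that value lies in the range, giving a balanced split.

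The main obstacle is precisely the endpoints of that range: a type whose active slopes are all of one sign never switches, so it returns an uninformative response to every endpoint comparison menu (this is the one dimensional degeneracy behind Proposition~\ref{prop:principal_two_actions}). This is where Assumption~\ref{assum:menu} enters. No dominant action forces at least two actions onto each type's upper envelope, so the best response action genuinely changes across $[0,1]$; when the endpoint comparison is uninformative for a block of monotone types, I would instead cut that block using the revealed best response action at the chosen endpoint, sweeping the query point across the action switch locations of those types to again obtain a balanced partition. Combining the slope sweep (which separates types by their turning behaviour) with the action sweep (which separates monotone types by their distinct active action sequences) yields a balanced split of $S$ in every round. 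I would also invoke a mild genericity assumption, in the spirit of Assumption~\ref{assum:non_proportional}, to rule out exact ties at the sweep's crossing points, so that each query's outcome is well defined and distinct types never respond identically to every menu. Iterating the balanced split for $\lceil\log_2|\Theta|\rceil$ rounds then completes the argument.
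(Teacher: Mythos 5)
Your proposal is correct and follows essentially the same route as the paper's proof: the paper's Algorithm \ref{alg:learning_menu} sorts the minimizers $x^{\theta_k}=\argmin_{x}V^{\theta_k}(x)$ of the piecewise-linear convex envelopes and binary-searches with two-point menus, which is exactly your sweep plus discrete-intermediate-value argument, since the point at which a type switches from left-preferring to right-preferring is precisely its minimizer. The degenerate case is also treated the same way --- types that cannot be separated by menu choice are separated by their revealed response actions --- except that the paper derives this from Assumption \ref{assum:menu} together with a tie-breaking convention, whereas you invoke an additional genericity assumption beyond the theorem's stated hypotheses.
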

\begin{proof}
We propose the following algorithm \ref{alg:learning_menu} to learn the agent's true type $\theta \in \Theta$ in $\log |\Theta|$ rounds. 
    \begin{algorithm}[tbh]
    \caption{\textsc{Learning-Via-Menu}}
    \label{alg:learning_menu}
    \textbf{Input}: Agent's type set: $\Theta = \{\theta_1, \cdots, \theta_K\}$\\
    \textbf{Output}: Agent's true type $\theta$
    \begin{algorithmic}[1] 
    \STATE Let $V^{\theta_k}(x) = \max_j V^{\theta_k}(x, j), \forall \theta_k \in \Theta$.
    \STATE Compute $x^{\theta_k} = \argmin_{x}V^{\theta_k}(x), \forall \theta_k \in \Theta$, let $\bvec{x} = \text{\textbf{sorted}}([x^{\theta_1}, \cdots, x^{\theta_K}])$
    \WHILE{len($\bvec{x}$) $>$ 1:}
    \STATE $n$ = len($\bvec{x}$); $mid = \lfloor \frac{n}{2} \rfloor$; $x_1, x_2 =\bvec{x}[mid], \bvec{x}[mid+1]$ 
    \STATE Provide a menu $\mathcal{M}=(x_1, x_2)$ of two principal strategies to the agent
    \IF {the agent chooses $x_1$}
    \STATE $\bvec{x} = \bvec{x}[mid+1:]$ (deleting all the elements $x[0],\cdots, x[mid]$ from $\bvec{x}$)
    \ELSE
    \STATE $\bvec{x} = \bvec{x}[:mid]$ (deleting all the elements $x[mid+1],\cdots, x[n-1]$ from $\bvec{x}$)
    \ENDIF
    \ENDWHILE
    \STATE \textbf{return} the only $x^{\theta_k}$ in $\bvec{x}$, where $\theta_k$ is the agent's type
    \end{algorithmic}
    \end{algorithm}

The correctness of the proof of this algorithm relies on a key insight that $V^{\theta_k}(x)$ is a convex function since it is the maximum over a set of linear functions. Thus, given two principal strategies $x_1$, $x_2$ where $x_1 < x_2$, if the agent $\theta$ prefers $x_1$, then the agent's $x^\theta = \argmin_x V^{\theta}(x)$ must be on the right of $x_1$ by the convex property of agent's utility function. A similar argument holds when the agents prefer $x_2$. As a result, in each round we can eliminate half of the types from the agent's type set, leading to a $\log |\Theta|$ sample complexity. 

However, there is still one more case we need to consider. When two agent types $\theta_k, \theta_{k'}$ have the same $x^{\theta_k} = x^{\theta_{k'}}$ in Line 2, they will always choose the same strategy for any menu of two strategies from Line 4. In this case, the principal can distinguish these two agent types by observing the agent's corresponding response to the chosen principal's strategy. 
By assumption \ref{assum:menu} there does not exist a dominant action for any agent type. Therefore, every agent type has more than one best response at $x^{\theta_k}$. We follow the convention that when the agent is indifferent between multiple responses, we can specify the tie-breaking rule \cite{letchford2009learning,peng2019learning}, ensuring that different types respond with different actions. Thus, the principal can still distinguish these different agent types according to different responses specified by different tie-breaking rules. 
\end{proof}

What is more, we also consider the scenario where the principal cannot offer a menu of principal strategies, but only a \textit{single} principal strategy. This setting has been extensively studied in the literature where $m > 1$ \cite{letchford2009learning, blum2014learning, peng2019learning}. Next, we show an efficient algorithm when $m=1$, which can learn the agent's private type in $\log |\Theta|$ rounds as our previous results. 
Though our next algorithm achieves the same $\log |\Theta|$ complexity as our previous result in Theorem \ref{thm:logk_without_agent_response}, it requires the following additional assumption on the finite agent type set $\Theta$ since the principal now has less flexibility and can only post a \textit{single} strategy per round.
\begin{assumption}\label{assum:strategy}
    Given a finite type set $\Theta$, we assume there does \textit{not} exist $\theta, \theta'\in \Theta$, $x \in \xxx$ and $j, j^{'}, j^{''} \in [n]$ such that
    \begin{equation}
        V^\theta(x, j) =  V^\theta(x, j') \text{ and } V^{\theta'}(x, j) =  V^{\theta'}(x, j^{''}) 
    \end{equation}
where $\theta \ne \theta'$, $j \ne j'$, and $j \ne j^{''}$.
\end{assumption}

At a high level, assumption \ref{assum:strategy} requires that there do not exist two agent types who not only have the same breakpoint in their piece-wise linear convex utility functions but also change from the same best response to different best responses at this breakpoint. 
In Figure \ref{fig:assumption}, we provide an illustration of the utility functions associated with two agent types. It is important to observe that instances failing to satisfy assumption \ref{assum:strategy} require not only the breakpoints of two piece-wise linear convex functions to coincide, which forms a zero measure set in the entire strategy space, but also necessitate the transition from the same response to different responses for these piecewise linear functions (e.g., $j^\theta_1 = j^{\theta'}_1$ in Figure \ref{fig:assumption}). Hence, we emphasize that assumption \ref{assum:strategy} is a nondegeneracy assumption since any randomly generated instance will satisfy this assumption with probability $1$. 

\begin{figure}[tbh]
    \centering
	\includegraphics[width=.285\textwidth]{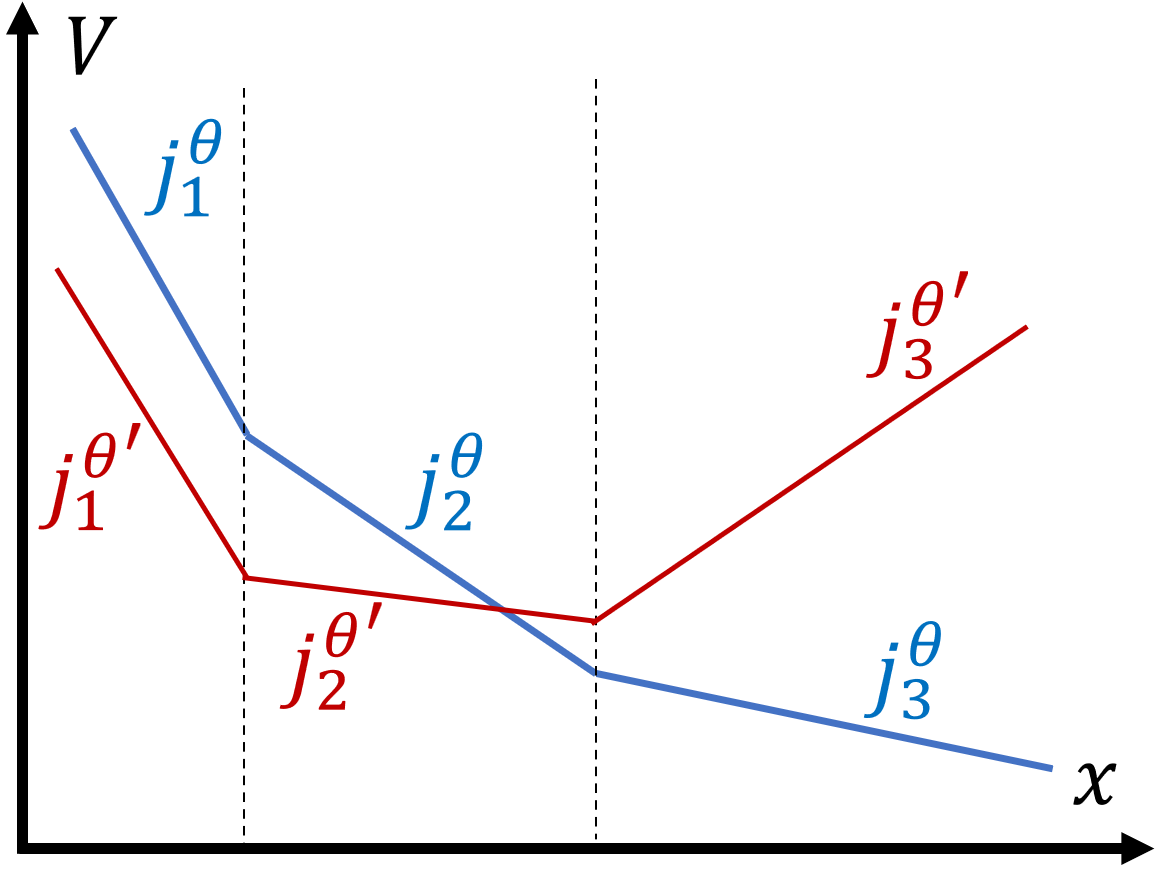} 
	\caption{An example of agents' utility functions $V^\theta$ and $V^{\theta'}$ when principal strategy $x \in \rr$.}
	\label{fig:assumption}
 \vspace{-1mm}
\end{figure}

\begin{restatable}{restatethm}{smallQueryStrategy}\label{thm:logk_with_agent_response}
When $m=1$ and $\Theta$ satisfies assumptions \ref{assum:menu}  \& \ref{assum:strategy},  there exists an algorithm to learn the agent's type in $\log |\Theta|$ rounds where the principal only posts a single strategy per round. 
\end{restatable}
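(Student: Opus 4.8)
The plan is to adapt the menu-based binary search of Algorithm~\ref{alg:learning_menu} to the single-strategy setting. Since $m=1$, for each type $\theta$ the value function $V^\theta(x)=\max_j V^\theta(x,j)$ is a piecewise-linear convex function on $[0,1]$, and I would first compute, for every $\theta\in\Theta$, its minimizer $x^\theta=\argmin_x V^\theta(x)$ and sort the types so that $x^{\theta_1}\le\cdots\le x^{\theta_K}$. As in the menu case, the goal of each round is to discard (roughly) half of the surviving types; the difference is that a round now consists of posting one strategy $x$ and observing only the agent's best-response action $j$, rather than a revealed preference between two strategies.

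The engine of the halving is again convexity, read this time through the reported action. For the true type $\theta^*$, the slope of its reported best response at the query point is $v^{\theta^*}_j=\nabla V^{\theta^*}(x,j)$, and by convexity of $V^{\theta^*}$ this slope is positive exactly when $x>x^{\theta^*}$ and negative exactly when $x<x^{\theta^*}$. Hence, if I post a strategy $x$ lying strictly between the two median minimizers, the sign of the slope of the action reported by the agent reveals on which side of $x$ the true minimizer $x^{\theta^*}$ lies, and I can delete the half of the sorted list on the wrong side. Iterating gives $\log|\Theta|$ rounds, matching Theorem~\ref{thm:logk_without_agent_response}.

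Two degeneracies must be dispatched, and this is exactly where the extra hypothesis, assumption~\ref{assum:strategy}, is needed (recall a single strategy is strictly less informative than a menu). First, several types may share the same minimizer, so they cannot be separated by position. For such a group, assumption~\ref{assum:strategy} guarantees that no two of them transition away from the same action at their common breakpoint; consequently, just to the left of the shared minimizer they report pairwise-distinct best responses, and assumption~\ref{assum:menu} (absence of a dominant action) guarantees that such a transition exists at all. A single query just left of the shared minimizer therefore identifies the type within the group. Second, and more delicately, the reported action must actually determine the slope sign, i.e.\ the side: this can fail only if two surviving types report the same action with opposite slopes around the query point, precisely the kind of coincident-breakpoint degeneracy that assumption~\ref{assum:strategy} is designed to forbid.

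The main obstacle is exactly this side-determination step. In the menu setting of Theorem~\ref{thm:logk_without_agent_response} the revealed preference between $x_1$ and $x_2$ directly encodes the comparison of $V^{\theta}(x_1)$ with $V^{\theta}(x_2)$, so convexity locates $x^\theta$ with no ambiguity; with only a single strategy the principal sees an action label, and the burden of the proof is to show that the query point can always be chosen so that this label unambiguously reports the slope sign and yields a balanced split. Establishing that assumptions~\ref{assum:menu} and~\ref{assum:strategy} jointly rule out every configuration in which two candidate types produce indistinguishable responses around the chosen query --- so that each round still halves the candidate set --- is the crux of the argument.
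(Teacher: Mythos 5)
Your reduction to binary search on the minimizers $x^\theta$ has a genuine gap, and it sits exactly at the step you yourself flag as the crux: the claim that assumptions \ref{assum:menu} and \ref{assum:strategy} force the observed action label to determine the sign of the slope. Assumption \ref{assum:strategy} only forbids two types from having a \emph{tie} (a breakpoint) at the same strategy $x$ involving a common action; it says nothing about two types both \emph{strictly} best-responding with the same action $j$ at a query point, one with $v^\theta_j>0$ and the other with $v^{\theta'}_j<0$. Concretely, take four types whose best-response interval for one fixed action $j$ is $[0.2,0.55]$, $[0.3,0.6]$, $[0.45,0.7]$, $[0.4,0.8]$ respectively, where the line of action $j$ has positive slope for the first two types (minimizers $0.2$, $0.3$) and negative slope for the last two (minimizers $0.7$, $0.8$). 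All breakpoints are distinct, so assumption \ref{assum:strategy} holds; no action is dominant, so assumption \ref{assum:menu} holds. Yet your query point, lying strictly between the median minimizers $0.3$ and $0.7$ (say $x=0.5$), elicits the same label $j$ from all four types: the round is completely uninformative, the ``wrong side'' is undefined (two consistent types lie on each side of the query), and the algorithm stalls. Worse, such configurations survive small perturbations of all parameters --- they form a positive-measure set --- so no nondegeneracy assumption of the kind the paper makes can exclude them; the algorithm itself has to cope with them.

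The paper's proof (Algorithm \ref{alg:learning_strategy}) copes by abandoning slope inference entirely and reasoning about \emph{counts of label-consistent types}. Let $j^*$ be the most popular best response at $x=0$ among surviving types. If more than half of them choose $j^*$ there, then assumption \ref{assum:menu} implies no type chooses $j^*$ at both endpoints (a linear utility that wins at $x=0$ and $x=1$ wins on all of $[0,1]$, i.e., is dominant), so the number of $j^*$-responders drops below half by $x=1$; and assumption \ref{assum:strategy} implies this count changes by at most one at a time as $x$ sweeps across $[0,1]$, since two types leaving or entering the $j^*$-region at the same $x$ is precisely the forbidden coincidence with common action $j=j^*$. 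A discrete intermediate-value argument then yields a point $x^*$ at which exactly $\lfloor |\Theta|/2\rfloor$ types respond with $j^*$, and querying $x^*$ halves the candidate set whichever label is observed. If instead no label captures more than half the types at $x=0$, querying $x=0$ already halves. Note that this counting argument is indifferent to whether types on both sides of their minimizers share a label, which is exactly the situation your slope-sign rule cannot handle; your (correct) use of assumption \ref{assum:strategy} to separate types sharing a common minimizer does not repair the halving engine itself.
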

\begin{proof}
We propose the following algorithm \ref{alg:learning_strategy} to learn the agent's true type in $\log |\Theta|$ rounds.

\begin{algorithm}[tbh]
    \caption{\textsc{Learning-Via-Single-Strategy}}
    \label{alg:learning_strategy}
    \textbf{Input}: Agent's type set: $\Theta = \{\theta_1, \cdots, \theta_K\}$\\
    \textbf{Output}: Agent's true type $\theta$
    \begin{algorithmic}[1] 
    \STATE Let $V^{\theta_k}(x) = \max_j V^{\theta_k}(x, j), \forall \theta_k \in \Theta$.
    \WHILE{$|\Theta|> 1$:}
    \STATE For all $j \in [n]$, compute $n^j_{x=0}$ as the number of agent types who best respond to $x=0$ with $j$.
    \STATE Denote $n^{j^*}_{x=0} = \max_j n^j_{x=0}$.
    \IF {$n^{j^*}_{x=0} > \lfloor \frac{|\Theta|}{2} \rfloor$}
    \STATE Compute $x^*\in (0,1]$ such that $n^{j^*}_{x=x^*} = \lfloor \frac{|\Theta|}{2} \rfloor$ 
    \STATE Principal plays strategy $x^*$:
    \IF {Agent responds with $j^*$}
    \STATE Remove agent types whose best response to $x^*$ is not $j^*$ from $\Theta$
    \ELSE
    \STATE Remove agent types whose best response to $x^*$ is $j^*$ from $\Theta$ 
    \ENDIF
    \ELSE
    \STATE Principal plays strategy $x^*=0$:
    \IF {Agent responds with $j^*$}
    \STATE Remove agent types whose best response to $x^*$ is not $j^*$ from $\Theta$
    \ELSE
    \STATE Agent responds with $j'$
    \STATE Remove agent types whose best response to $x^*$ is not $j'$ from $\Theta$
    \ENDIF
    \ENDIF
    \ENDWHILE
    \STATE \textbf{return} $\Theta$
    \end{algorithmic}
    \end{algorithm}
\noindent The proof of this algorithm relies on a key insight that when $n^{j^*}_{x=0} > \frac{|\Theta|}{2}$, we can always find a principal strategy $x^*$ such that the number of agent types responding to $x^*$ with $j^*$ is $\lfloor \frac{|\Theta|}{2} \rfloor$ (i.e. Line 6). The reason is as follows. Given any agent action $j \in [n]$, we denote $n^j_{x=0}$ and  $n^j_{x=1}$ as the number of agent types who respond to $x=0$ and $x=1$ (i.e., the left-most and right-most principal strategy in principal's strategy space) with action $j$. 

According to assumption \ref{assum:menu}, there does not exist a dominated action for every agent type $\theta$. Thus, we must have $n^j_{x=0} + n^j_{x=1} \leq |\Theta|$. In addition, by assumption \ref{assum:strategy}, there do not exist two agent types that change from the same agent response to different responses at the same principal strategy.  As a result, we must have $n^j_{x=0}$  always changes smoothly (i.e., changes at most by $1$) to $n^j_{x=1}$.
Therefore, if $n^{j^*}_{x=0} > \lceil \frac{|\Theta|}{2}  \rceil$ in Algorithm \ref{alg:learning_strategy}, then we must have  $n^{j^*}_{x=1} < \lfloor \frac{|\Theta|}{2} \rfloor$ and
there must exist $x^* \in (0,1)$ such that the number of agent types responding to $x^*$ with $j^*$ is exactly $\lfloor \frac{|\Theta|}{2} \rfloor$.

On the other hand (i.e., Line 13), if $n^{j^*}_{x=0} \le \lceil \frac{|\Theta|}{2} \rceil$, then we know there already exists such an $x^* = 0$. This is due to the fact that for every possible agent response action $j \in [n]$, the number of agent types that responds to $x^*=0$ with this action is less than or equal to $\lceil \frac{|\Theta|}{2} \rceil$. Thus, we can remove at least $\lceil \frac{|\Theta|}{2} \rceil$  possible types no matter which response is observed when playing principal strategy $x^*=0$.

As a result, each round in the \textbf{while} loop (line 2) is guaranteed to remove at least $\lfloor \frac{|\Theta|}{2} \rfloor$  possible agent types from the $\Theta$ set, leading to a $\log |\Theta|$ rounds learning algorithm. 
\end{proof}

Finally, as we discussed in Section \ref{subsec:security_games}, the actual dimension of the defender's strategy space is $1$ in Stackelberg security games. Similarly, we can apply Algorithm \ref{alg:learning_strategy} to learn the attacker's private type in $\log |\Theta|$ rounds. 

\begin{corollary}\label{col:security_log}
Suppose $\Theta$ satisfies assumptions \ref{assum:menu}  \& \ref{assum:strategy},  there exists an algorithm to learn the attacker's private type in $\log |\Theta|$ rounds. 
\end{corollary}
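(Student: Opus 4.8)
The plan is to reduce the Stackelberg security game to the one-dimensional single-strategy setting of Theorem \ref{thm:logk_with_agent_response} and then invoke Algorithm \ref{alg:learning_strategy} verbatim. As noted in Section \ref{subsec:security_games}, the attacker's utility for attacking a fixed target $j$, namely $V^\theta(\x, j) = (P^{a,\theta}_j - R^{a,\theta}_j)\,x_j + R^{a,\theta}_j$, is affine and (since $R^{a,\theta}_j > P^{a,\theta}_j$) strictly decreasing in the single coordinate $x_j$. The obstacle to applying Theorem \ref{thm:logk_with_agent_response} directly is that distinct targets depend on distinct coordinates, so the defender's strategy is not literally a scalar; the reduction must manufacture a genuine one-dimensional control knob.

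First, I would parameterize the defender's strategies along a line segment inside the feasible polytope. Fix a base point $\x_0$ and a direction $\a$ with all $a_t > 0$, and let $\x(s) = \x_0 + s\,\a$ for $s$ ranging over the interval on which $\x(s)$ stays feasible (i.e. $0 \le x_t(s) \le 1$ and $\sum_t x_t(s) \le r$); rescaling, take this interval to be $[0,1]$. Substituting gives $V^\theta(\x(s), j) = (P^{a,\theta}_j - R^{a,\theta}_j)\,a_j\, s + \text{const}$, which is affine in the scalar $s$. Hence every attacker type's utility $V^\theta(s) := \max_{j} V^\theta(\x(s), j)$ is a piecewise-linear convex function of $s$, exactly the structure exploited by Algorithm \ref{alg:learning_strategy}. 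This turns the security game into an instance of the $m=1$ single-strategy problem with scalar strategy $s \in [0,1]$.

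Second, I would verify that assumptions \ref{assum:menu} and \ref{assum:strategy} on the attacker type set descend to the induced one-dimensional instance: no attacker type has a target that is a best response for all $s$ (the slice-version of assumption \ref{assum:menu}), and no two types share a breakpoint at which both switch away from a common target to distinct targets (the slice-version of assumption \ref{assum:strategy}). With these in hand, Theorem \ref{thm:logk_with_agent_response} applies and Algorithm \ref{alg:learning_strategy} recovers $\theta$ in $\log|\Theta|$ rounds, each round playing the single defender strategy $\x(s^*)$ corresponding to the scalar $s^*$ the algorithm selects.

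The main obstacle is precisely this descent of the nondegeneracy assumptions: because each target's utility depends on only its own coordinate, a careless choice of the line could collapse the breakpoint structure (e.g. aligning the breakpoints of two types, or making some target dominant along the slice) even when the full-dimensional instance is nondegenerate. I would handle this by choosing the base point $\x_0$ and direction $\a$ generically: the set of $(\x_0, \a)$ for which the induced affine functions violate assumption \ref{assum:menu} or \ref{assum:strategy} is a finite union of lower-dimensional algebraic conditions (coincidence of breakpoints, coincidence of slopes, a line lying entirely within one best-response region), hence a measure-zero set, so almost every line yields a valid $m=1$ instance. This matches the nondegeneracy spirit in which assumptions \ref{assum:menu} and \ref{assum:strategy} are posed and completes the reduction.
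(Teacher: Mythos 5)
Your high-level route --- treat the security game as effectively one-dimensional, then invoke Theorem~\ref{thm:logk_with_agent_response} and Algorithm~\ref{alg:learning_strategy} with the defender and attacker playing the roles of principal and agent --- is exactly the paper's route; the paper's proof of Corollary~\ref{col:security_log} is nothing more than this observation, reading Assumptions~\ref{assum:menu} and~\ref{assum:strategy} as hypotheses that already hold for the one-dimensional instance on which the algorithm is run. Where you go beyond the paper is in trying to \emph{derive} those one-dimensional hypotheses from assumptions on the full-dimensional game by restricting to a line $\x(s)=\x_0+s\a$ chosen generically. That extra step is where your argument breaks.

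The flaw is your claim that ``a line lying entirely within one best-response region'' is a lower-dimensional algebraic condition on $(\x_0,\a)$, hence measure zero. Best-response regions are full-dimensional convex polytopes, so the set of segments contained in one of them has \emph{positive} measure, and there are open sets of instances --- all satisfying no-dominance over the full strategy space --- for which \emph{every} admissible line fails the slice version of Assumption~\ref{assum:menu}. Concretely, take $n=2$ targets, $r=1$ resource, so $\xxx=\{\x\ge \bm{0}: x_1+x_2\le 1\}$. Let type $\theta$ have $R^{a,\theta}_1=1$, $P^{a,\theta}_1=0$, $R^{a,\theta}_2=0.2+\delta$, $P^{a,\theta}_2=0.2$ with $\delta$ small; its switch locus is $x_1=0.8-\delta(1-x_2)$, a nearly vertical segment confined to $\{x_1\approx 0.8,\ x_2\le 0.2\}$, and neither target dominates over $\xxx$. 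Let type $\theta'$ be the mirror image, with switch locus confined to $\{x_2\approx 0.8,\ x_1\le 0.2\}$. A line with $a_1,a_2>0$ has both coordinates increasing in $s$; if it met the first locus at $s_A$ and the second at $s_B$, then $x_1(s_A)\approx 0.8 > 0.2\ge x_1(s_B)$ forces $s_A>s_B$, while $x_2(s_B)\approx 0.8>0.2\ge x_2(s_A)$ forces $s_B>s_A$, a contradiction. So every positive-direction line misses at least one locus, and along it that type has a dominant target: the slice version of Assumption~\ref{assum:menu} fails for \emph{all} your candidate lines, not a measure-zero set of them, and the hypotheses of Theorem~\ref{thm:logk_with_agent_response} are never met. (This does not falsify the corollary --- in this two-type example the types are still distinguishable by their responses --- but it does break your proof, since you can no longer invoke the theorem.) If you intend the stronger statement in which the nondegeneracy assumptions are imposed on the full-dimensional game, you need a genuinely different way of selecting defender strategies; under the paper's weaker reading, the corollary follows immediately from Theorem~\ref{thm:logk_with_agent_response} with no genericity argument at all.
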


\noindent The proof is straightforward based on Theorem \ref{thm:logk_with_agent_response} and Algorithm \ref{alg:learning_strategy}. The role of the defender and attacker corresponds to the principal and agent in the Stackelberg game. Similarly, Algorithm \ref{alg:learning_strategy} can be applied to security games.

\section{When are Polynomially Many Rounds Sufficient?}\label{sec:infinite_type}
In this section, we consider the more general case where the principal does not have prior information (i.e., $\Theta$) about the agent's private type, and this private type can be drawn from an \textit{infinite} type space (see \citet{letchford2009learning,blum2014learning,peng2019learning} for previous work without menus). 
Following a similar argument to Theorem \ref{thm:constant_query}, we can construct a menu of strategies centered at $\hat{\x}$ with a small radius $\epsilon$ (i.e., $\mmm = \{\x: |\x - \hat{\x}|^2\leq \epsilon \}$). Agent $\theta$'s best response to all principal strategies within this menu is $j \in [n]$ and we denote the agent's optimal choice as $\x^*(\theta)$. From the analysis of Theorem \ref{thm:constant_query}, we have 
\begin{small}\begin{equation}\label{eq:recall_theorem}
    x^*_i(\theta) = \sqrt{\epsilon} v^\theta_{j,i}/ \|\v^\theta_j\|_2 + \hat{x}_i, \quad \forall i \in [m].
\end{equation}\end{small}
As a result, from equation \eqref{eq:recall_theorem} we have
\begin{small}\begin{equation}\label{eq:rewrite_gradient}
    v^\theta_{j,i}/ \|\v^\theta_j\|_2 = \big(x^*_i(\theta) - \hat{x}_i \big)/\sqrt{\epsilon}, \quad \forall i \in [m].
\end{equation}\end{small}
From the above $m$ equations $\eqref{eq:rewrite_gradient}$, we can compute $\v^\theta_j$ up to a factor of $\lambda_j \in \rr$, i.e., $\v^\theta_j = \lambda_j \tilde{\v}^\theta_j$ where $\tilde{\v}^\theta_j$ is an \textit{unit vector} that satisfies \eqref{eq:rewrite_gradient}. As a result, learning the private agent utility parameters $\v^\theta_j \in \rr^m$ is reduced to learning $\lambda_j \in \rr$.

Before proceeding to our main result, we assume that there is an oracle that provides the principal with a set of strategies that induce the agent to best respond with all possible actions for some strategy in the set.
\begin{assumption}[Agent Action-informed Oracle]\label{assum:oracle}
    There exists an oracle that provides $n$ principal strategies $\{\hat{\x}_j\}_{j\in [n]}$ such that under principal strategy $\hat{\x}_j$, the agent's best response is action $j$.
\end{assumption}
\noindent This assumption has been adopted by previous work \cite{chen2023learning,pmlr-v202-zhao23o}. It is considerably less stringent compared to existing online learning models in strategic environments, which assumes that the principal can anticipate the agent's best response or possesses knowledge of certain parameters of the agent's utility function \cite{chajewska2001learning,cesa2006prediction,shalev2012online}. In addition, this may be a reasonable assumption in practice for certain classes of principal-agent problems. For example, in the contract design problem, each outcome may be achieved primarily by a specific action (e.g., a high-quality outcome is most likely achieved by an action with high effort). Therefore, we can induce one specific action by rewarding its corresponding outcome while setting the reward for other outcomes as $0$.
With assumption \ref{assum:oracle}, we are now ready to present the main result of this section.
\begin{restatable}{restatethm}{infiniteType}\label{thm:infinite_type}
Under assumption \ref{assum:oracle}, 
    the agent's private type can be learned 
    with $O(n^2L)$ rounds, where $L$ is the representation precision. 
\end{restatable}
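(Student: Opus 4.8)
The plan is to reduce the problem of learning the full type $\{\v_j^\theta, c_j^\theta\}_{j\in[n]}$ to recovering, for each action $j$, just a scalar magnitude and an offset, and then to pin those scalars down by locating the boundaries between best-response regions via binary search. As already established in equation \eqref{eq:rewrite_gradient}, a single menu round centered at the oracle point $\hat{\x}_j$ (Assumption \ref{assum:oracle}, under which the agent best responds with $j$) reveals the unit vector $\tilde{\v}_j^\theta = \v_j^\theta / \|\v_j^\theta\|_2$ exactly. Hence after $n$ menu rounds the only remaining unknowns are the magnitudes $\lambda_j := \|\v_j^\theta\|_2$ and the constants $c_j^\theta$, i.e.\ $2n$ real numbers. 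Because the agent's best response $\argmax_j V^\theta(\x,j)$ is invariant under a common positive rescaling and a common additive shift of all utilities, these $2n$ numbers are identifiable only up to a two-dimensional gauge; I would fix it by normalizing, say $\lambda_1 = 1$ and $c_1^\theta = 0$, so that the behavioral type is fully recovered once the remaining parameters are known up to this gauge.

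Next I would exploit the geometry of indifference. For a pair of actions $(j,j')$ the indifference set $\{\x : \lambda_j \langle \tilde{\v}_j^\theta, \x\rangle + c_j^\theta = \lambda_{j'} \langle \tilde{\v}_{j'}^\theta, \x\rangle + c_{j'}^\theta\}$ is a hyperplane whose \emph{coefficients are known}, since the directions are already recovered. Therefore any point $\x_0$ certified to lie on this boundary yields one linear equation in the unknowns $(\lambda_j, c_j^\theta, \lambda_{j'}, c_{j'}^\theta)$. In particular, two boundary points $\x_1, \x_2$ give, upon subtraction, $\lambda_j \langle \tilde{\v}_j^\theta, \x_1 - \x_2\rangle = \lambda_{j'} \langle \tilde{\v}_{j'}^\theta, \x_1 - \x_2\rangle$, which determines the ratio $\lambda_j / \lambda_{j'}$ and then the difference $c_j^\theta - c_{j'}^\theta$.

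The algorithm then proceeds as follows: for each of the $O(n^2)$ unordered pairs $(j,j')$, run a binary search along the segment $[\hat{\x}_j, \hat{\x}_{j'}]$ joining the two oracle points, at whose endpoints the agent best responds with $j$ and $j'$ respectively. The best response must change along this segment, and by comparing the responses on the two sides of each query point (a single principal strategy per round) we localize a switch between two identifiable actions to additive precision $2^{-\Theta(L)}$ in $O(L)$ rounds. Since every parameter is an $L$-bit number, a boundary located to this precision can be rounded to its exact rational value, producing an exact linear equation. Collecting the equations over all pairs gives a linear system; chaining the recovered ratios and offset-differences across a connected set of boundaries then recovers every $\lambda_j$ and $c_j^\theta$ up to the fixed gauge. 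The round count is $n$ (for the directions) plus $O(n^2)\cdot O(L)$ (for the boundaries), i.e.\ $O(n^2 L)$ in total.

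The main obstacle is twofold. First, a segment $[\hat{\x}_j, \hat{\x}_{j'}]$ may cross several best-response regions, so the binary search must be arranged to isolate a \emph{single} switch and to certify which two actions flank it; this is where I would invoke a nondegeneracy condition (pairwise non-parallel directions $\tilde{\v}_j^\theta$, and no coincidental triple intersection of indifference hyperplanes along a search line) so that switches are clean and each located point yields a valid equation. Second, and deeper, is identifiability: I must show the assembled equations have enough rank to determine all $2n$ parameters up to the two-dimensional gauge. This amounts to verifying that the graph whose edges are the discovered pairwise boundaries is connected and that the corresponding coefficient vectors are linearly independent — a property that holds generically but needs a careful argument, and it is precisely what justifies searching over (up to) all $O(n^2)$ pairs rather than a bare spanning set, giving the stated $O(n^2 L)$ bound.
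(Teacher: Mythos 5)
Your proposal is correct and follows essentially the same route as the paper's proof: learn the unit directions $\tilde{\v}^\theta_j$ with one menu round per oracle point $\hat{\x}_j$, then locate points on pairwise separating hyperplanes by binary search along segments between strategies inducing different best responses (this is exactly Lemma 2 of \citet{letchford2009learning}, which the paper invokes, at $O(L)$ rounds per pair), and finally solve the linear indifference equations for the remaining scalars, giving $n + O(n^2)\cdot O(L) = O(n^2 L)$ rounds in total. The one substantive difference is the endgame. The paper samples $4$ points per hyperplane and claims to solve the four unknowns $\lambda_j, \lambda_{j'}, c_j, c_{j'}$ outright; you instead observe that each boundary equation $\lambda_j \langle \tilde{\v}_j, \x\rangle + c_j = \lambda_{j'}\langle \tilde{\v}_{j'}, \x\rangle + c_{j'}$ is linear \emph{homogeneous} in $(\lambda_j, \lambda_{j'}, c_j - c_{j'})$, so no number of such points can do better than recover these up to a common scale, with the constants entering only through their difference. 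Your gauge-fixing ($\lambda_1 = 1$, $c_1 = 0$) and chaining of ratios and offset differences across boundaries is therefore the more careful treatment: it makes explicit the two-dimensional invariance (common positive rescaling plus common additive shift) under which the type is at best behaviorally identifiable, a point the paper's ``four equations, four unknowns'' step silently glosses over. The price is that you need the boundary graph to be connected, which does hold since the best-response regions are convex polytopes partitioning the convex strategy space $\xxx$, so adjacent regions share facets and the adjacency graph is connected; your instinct that this requires an argument (and justifies searching all $O(n^2)$ pairs) is right, and the paper omits it entirely. Both arguments yield the same $O(n^2 L)$ bound.
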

\begin{proof}[Proof Sketch]
The high-level proof idea is as follows. Given any menu $\mmm_j = \{\x: |\x - \hat{\x}_j|^2\leq \epsilon \}$ for all $j \in [n]$, we can learn the agent's utility parameters $\v^\theta_j$ up to a factor of $\lambda_j$, that is, $\v^\theta_j = \lambda_j \tilde{\v}^\theta_j$ where $\tilde{\v}^\theta_j$ is an unit vector that can be computed by \eqref{eq:rewrite_gradient}. Therefore, learning the agent's utility function (i.e., $\v^\theta_j, c_j$ for all  $j \in [n]$) is reduced to learning $\lambda_j, c_j$ for all $j \in [n]$. Furthermore, for any principal strategy $\hat{\x}_{j, j'}$ that's on the separating hyperplane between the best response sub-regions of two actions $j,j' \in [n]$ (see our Figure \ref{fig:feasible_region} for an example of dividing the principal's strategy space into multiple best response sub-regions), we have the agent's utility of responding with $j$ equals to the utility of responding with $j'$ under the principal strategy $\hat{\x}_{j, j'}$, that is,
\begin{small}\begin{equation}\label{eq:hyperplane_main}
\lambda_j \langle \tilde{\v}_j, \hat{\x}_{j, j'} \rangle + c_j = \lambda_{j'} \langle \tilde{\v}_{j'}, \hat{\x}_{j, j'} \rangle + c_{j'}.
\end{equation}\end{small}
where $\lambda_j, \lambda_{j'}, c_{j}$, and $c_{j'}$ are unknown variables. Thus, sampling $4$ principal strategies on this hyperplane is enough to learn these four variables. We continue this process for all possible hyperplanes, and there are at most $n^2$ in total. Detailed proof can be seen in Appendix \ref{appendix:proof}.
\end{proof}

\noindent It is still an open question whether it is sufficient to learn with fewer than $O(n^2 L)$ rounds. 
Using menus, our sample complexity of $O(n^2 L)$ is an improvement over the current state of the art, $O(n^3 L)$ \cite{peng2019learning}, for settings where the principal can only offer a single strategy, an improvement of $O(n)$.
\subsection{Online Learning in Stackelberg Games}\label{sec:connection}
In this subsection, we demonstrate that the ability to offer a menu can dramatically improve the sample complexity of a well-known hardness example from \citet{peng2019learning} for Stackelberg games.
To begin, we provide a summary of current results on Stackelberg games as in Table \ref{table:Stackelberg}.
\begin{table}[tbh]
\centering
    \begin{tabular}{|c|c|c|}\hline
     \specialcell{Number  of \\ rounds}& $m=2$ & $m>2$ \\[0.5em] \hline 
     Menu & \specialcell{$\log |\Theta|$\\ Theorem \ref{thm:logk_without_agent_response}} & \specialcell{ $O(1)$ \\ Corollary \ref{corollary:stackelberg}}\\[0.5em] \hline
     Single strategy & \specialcell{ $\log |\Theta|$\\ Theorem \ref{thm:logk_with_agent_response}} & \specialcell{$|\Theta|$ \\ \cite{peng2019learning}} \\[0.5em] \hline
    \end{tabular}
    \caption{Summary of the results discussed regarding the required number of interaction rounds to learn the private follower's type from a finite type set $\Theta$ in Stackelberg games.} 
    \label{table:Stackelberg}
    \vspace{-5mm}
\end{table}

\begin{example}[Lemma 8, \cite{peng2019learning}]\label{example:peng2019}
    Consider the Stackelberg game, where the leader has $m$ actions and the follower has $n=m+2$ actions. The utility parameters of every follower type $\theta$ is
    \begin{small}
    \begin{equation*}
    \begin{split}
        &F^\theta_{\cdot,j} = (\underbrace{-\tfrac{2}{m-2}, \cdots, -\tfrac{2}{m-2}}_{i-1}, 1, \underbrace{-\tfrac{2}{m-2}, \cdots, -\tfrac{2}{m-2}}_{m-i}), \,\, \forall j \in [m];  \\
        &F^\theta_{\cdot,m+1} = \Big(\underbrace{-\frac{1}{N^3}, -\frac{1}{N^3}, \cdots, -\frac{1}{N^3}}_{m}\Big);  \\
        &F^\theta_{\cdot, a^*} = \frac{1}{N^2} \v^\theta.
    \end{split}
    \end{equation*}
    \end{small}
  
\noindent where $\theta \subseteq [m]$ with $|\theta| = m/2$ or $0$, that is, $\theta$ is either an empty set or a set with $m/2$ elements that are drawn from set $[m]=\{1, \cdots, m\}$. $\v^\theta \in \{1, -N\}^m$ is a vector of length $m$ whose $i$-th element is $1$ if $i \in \theta$ and $-N$ otherwise. There are $\big(\begin{smallmatrix}m\\m/2\end{smallmatrix}\big)+1$ follower types in total. 
\end{example}
In the case of this Example \ref{example:peng2019} from \citet{peng2019learning}, they show that a minimum of $2^{\Omega(m)}$ samples is necessary for learning the follower's private type from $\Theta = \emptyset \cup \{\theta: |\theta| = m/2 \text{ and } \theta \subset [m]\}$. 
We demonstrate the follower's type can be learned within a \textit{single} round with menus.

\begin{restatable}{restateprop}{hardnessExample}\label{prop:hardness_example}
    Within a single round, a menu of leader strategies is available to learn the follower's private type in Example \ref{example:peng2019}.
\end{restatable}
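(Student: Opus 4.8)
The plan is to exploit the fact that, among the $n=m+2$ reward columns of every follower type, only the column for action $a^\ast=m+2$, namely $F^\theta_{\cdot,a^\ast}=\tfrac{1}{N^2}\v^\theta$, depends on $\theta$; the columns for actions $1,\dots,m$ and for action $m+1$ are identical across all types. Hence two types can be distinguished only through a response that involves $a^\ast$, and a crucial preliminary observation is that the direct ``ball menu'' of Theorem \ref{thm:constant_query} cannot be used here. Indeed, requiring every type to best respond with $a^\ast$ at a common interior strategy $\hat\x$ would force $\sum_{i\in\theta}\hat x_i\ge 1-1/N$ for every type $\theta$ of size $m/2$ at once, which is impossible (applying it to a type and its complement already sums to more than $1$). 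So instead I would build a \emph{self-selection} menu that hands each type its own dedicated strategy.

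Concretely, for each nonempty type $\theta$ I set $\x_\theta$ to be the uniform distribution on the coordinates of $\theta$, i.e. $(\x_\theta)_i=2/m$ for $i\in\theta$ and $0$ otherwise (recall $|\theta|=m/2$, so $\x_\theta$ lies in the simplex), and offer the menu $\mmm=\{\x_\theta:\theta\in\Theta,\ \theta\ne\emptyset\}$. The core of the argument is a short utility computation at a menu point $\x_\theta$ for an arbitrary evaluating type $\theta'$: the coordinate actions $j\in[m]$ give utility $0$ when $j\in\theta$ and $-\tfrac{2}{m-2}$ otherwise, action $m+1$ gives $-1/N^3$, and action $a^\ast$ gives $\tfrac{2}{mN^2}\big(|\theta\cap\theta'|-N|\theta\setminus\theta'|\big)$. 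Reading off the two cases: when $\theta'=\theta$ the $a^\ast$ value is exactly $1/N^2>0$, strictly beating the best coordinate action (value $0$); when $\theta'\ne\theta$ is nonempty, $|\theta\setminus\theta'|\ge 1$, so for the large parameter $N$ (any $N>m/2$ suffices) the $a^\ast$ value is negative and $\theta'$'s value at $\x_\theta$ is only $0$.

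These inequalities show that each nonempty type $\theta'$ attains $1/N^2$ at its own strategy $\x_{\theta'}$ but only $0$ at every other menu point, so it strictly prefers $\x_{\theta'}$ and responds there with $a^\ast$ as a \emph{strict} best response, hence robustly to any tie-breaking. The principal then recovers $\theta'$ as the support of the observed $\x^\ast$. It remains to handle the empty type: at every $\x_\theta$ its $a^\ast$ value is $-1/N<0$, so its optimal value is $0$, attained only by a coordinate action in $[m]$; thus $\theta=\emptyset$ responds with an action in $[m]$ and never with $a^\ast$. Since nonempty types always respond with $a^\ast$, observing the reported action distinguishes the empty type even if its (indifferent) choice of $\x^\ast$ happens to coincide with some $\x_\theta$. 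Combining the two cases identifies the type within a single round.

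I expect the main obstacle to be conceptual rather than computational: recognizing that the ball menu of Theorem \ref{thm:constant_query} provably fails on this instance, and that one must instead design a menu whose points are \emph{individually} incentive-aligned with distinct types, i.e. a revelation-style menu keyed to the only type-dependent action $a^\ast$. Once that design is fixed, the remaining work is the routine verification of the four displayed utilities, the separating inequality governed by $N$, and the careful (but easy) treatment of the indifferent empty type via its reported action.
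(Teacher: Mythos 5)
Your proposal is correct and follows essentially the same route as the paper's proof: the identical self-selection menu $\mmm=\{\x^\theta\}_{\theta\in\Theta}$ with $x^\theta_i = 2/m$ on $\theta$ and $0$ elsewhere, the same utility computations showing each nonempty type strictly prefers its own strategy and responds with $a^*$, and the same treatment of the empty type via its observed response action. The only (minor) difference is that you verify the off-menu $a^*$ utilities by direct computation, whereas the paper invokes the best-response region conditions for $a^*$ from \citet{peng2019learning}; your version is slightly more self-contained but not a different argument.
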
 
\noindent In our proof, we construct a menu of leader strategies where every different follower type prefers different leader strategies. We refer the reader to the detailed construction of this menu in Appendix \ref{appendix:proof}.
\section{Conclusion}
Motivated by the recent research on the power of menus in contract design \cite{gan2022optimal,castiglioni2022designing,guruganesh2023power}, this paper put forwards a variant of the very basic online learning model in principal-agent problems by augmenting the principal policy space with posting a menu of principal strategies instead of a single principal strategy at every round of online interaction with the agent.
We provide an understanding of the learning complexities for a variety of problem settings, offering dedicated learning algorithms. 
Furthermore, we apply our general principal-agent framework to various concrete game instances, including widely studied cases like the contract design problem and the Stackelberg game, establishing a connection between our results and earlier online learning results in the literature.

There are many open questions around the power of menus in online principal-agent problems. For example,  
while our focus is to learn the agent's private type in this paper, it would be interesting to explore the problem of optimizing the principal utility throughout the online principal-agent interactions using menus.
What is more, our model assumes no constraints on the size of  principal menus. A natural question would be characterizing the sample complexity if the principal can only post a finite menu of strategies whose size is upper bound by a constant.
\clearpage
\section*{Acknowledgments}
This work is supported by NSF award CCF-2303372, Army Research Office
Award W911NF-23-1-0030 and Office of Naval Research Award N00014-23-1-2802.
\bibliography{refer}
\appendix 
\section{Omitted Proofs of Section \ref{sec:infinite_type}}\label{appendix:proof}
\infiniteType*
\begin{proof}
From equation \eqref{eq:rewrite_gradient}, we can learn the agent's utility parameter $\v^\theta_j$ up to a factor of $\lambda_j$ given a menu $\mmm_j = \{\x: |\x - \hat{\x}_j|^2\leq \epsilon \}$ for all $j \in [n]$. 
With the set of principal strategies $\{\hat{\x}_j\}_{j\in [n]}$ under assumption \ref{assum:oracle}, the principal can learn the agent's utility information $(\v^\theta_1, \cdots, \v^\theta_n)$ as $(\lambda_1 \tilde{\v}^\theta_1, \cdots, 
\lambda_n \tilde{\v}^\theta_n)$ with some unknown scalars $\bm{\lambda} = 
(\lambda_1,\cdots, \lambda_n)$. 

Next, we show how to learn $\bm{\lambda}$ given the knowledge of $(\tilde{\v}^\theta_1, \cdots,  \tilde{\v}^\theta_n)$, we ignore the superscript $\theta$ when it is clear from the context. Given any principal strategy $\x$ and its corresponding agent best response $j$, the agent's utility would be $V(\x, j) = \lambda_j \langle \tilde{\v}_j, \x \rangle + c_j$. Also, we have the agent's utility of responding with $j$ is greater than or equal to the utility of responding with any other $j' \ne j$, i.e., $\lambda_j \langle \tilde{\v}_j, \x \rangle + c_j \geq  \lambda_{j'} \langle \tilde{\v}_{j'}, \x \rangle + c_{j'}$. Formally, we have 
\[
\lambda_j \ge  \lambda_{j'} \frac{\langle\tilde{\v}_{j'}, \x \rangle}{\langle \tilde{\v}_j, \x \rangle} + \frac{c_{j'} - c_j}{\langle \tilde{\v}_j, \x \rangle}
,\quad \forall j'.
\]

Next, we introduce a useful lemma for finding a point on the separating hyperplane between the best response sub-regions of two actions $j, j'\in [n]$ (see our Figure \ref{fig:feasible_region} for an example of dividing the principal's strategy space into multiple best response sub-regions).
\begin{lemma}[Lemma 2, \cite{letchford2009learning}]\label{lem:letchford}
    Given two effective actions $j$ and $j'$ for the agent and the corresponding principal strategies $\hat{\x}_{j}$ and $\hat{\x}_{j'}$, with $O(L)$ samples, where $L$ is the representation precision, we can find a new point on the separating hyperplane between $j$ and $j'$ if the separating hyperplane exists.
\end{lemma}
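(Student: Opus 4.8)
The plan is to prove the lemma by binary search along the line segment joining the two given strategies, exploiting the convexity of the agent's best-response regions. First I would parametrize the segment as $\x(t) = (1-t)\hat{\x}_{j} + t\hat{\x}_{j'}$ for $t \in [0,1]$, so that $\x(0) = \hat{\x}_{j}$ induces best response $j$ and $\x(1) = \hat{\x}_{j'}$ induces best response $j'$. The central structural observation is that each best-response region $R_k = \{\x : V(\x,k) \ge V(\x,k') \text{ for all } k'\}$ is an intersection of half-spaces and hence convex; consequently its intersection with the segment is a single sub-interval of $[0,1]$. Therefore, as $t$ increases from $0$ to $1$, the agent's observed best response is a step function of $t$ that changes only finitely often, and the boundary between two consecutive regions is exactly a point lying on the corresponding separating hyperplane $H_{k,k'} = \{\x : V(\x,k) = V(\x,k')\}$.

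Given this, I would run a binary search maintaining an interval $[t_{lo}, t_{hi}]$ on which the observed response at $t_{lo}$ is $j$ and at $t_{hi}$ is $j'$. At each step the principal posts the single strategy $\x(t_{mid})$ with $t_{mid} = (t_{lo}+t_{hi})/2$ and observes the agent's response: if it is $j$ we set $t_{lo} \leftarrow t_{mid}$, and otherwise $t_{hi} \leftarrow t_{mid}$. Because $R_{j}$ meets the segment in a contiguous interval $[0,a]$ and $R_{j'}$ in $[b,1]$, when the two regions are adjacent along the segment --- which is precisely the hypothesis that the separating hyperplane exists as a shared boundary --- we have $a=b$, and the search converges to the unique transition value $t^{*}$ at which $V(\x(t^{*}),j) = V(\x(t^{*}),j')$ with both actions simultaneously optimal, i.e.\ a genuine point of $H_{j,j'}$.

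The precision argument supplies the $O(L)$ bound. Since every utility parameter is specified with $L$ bits and the indifference condition $\inner{\v_{j}}{\x(t)} + c_{j} = \inner{\v_{j'}}{\x(t)} + c_{j'}$ is linear in $t$, the transition value $t^{*}$ is a rational whose numerator and denominator have $O(L)$ bits. Hence distinct candidate boundary values are separated by at least $2^{-cL}$ for some constant $c$, so after $O(L)$ halvings the interval $[t_{lo}, t_{hi}]$ is short enough that $t^{*}$ is uniquely pinned down, and its exact rational value (and thus the exact point on $H_{j,j'}$) can be recovered by rounding, e.g.\ via continued fractions.

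I expect the main obstacle to be the handling of intermediate regions. If a third action $j''$ is optimal for some $t \in (a,b)$, the naive search above can converge to a boundary of $R_{j}$ or $R_{j'}$ lying on $H_{j,j''}$ or $H_{j',j''}$ rather than on the desired $H_{j,j'}$. The clean way to address this is to exploit that the lemma only claims success ``if the separating hyperplane exists,'' which I would interpret as $R_{j}$ and $R_{j'}$ sharing a facet along the chosen segment; the delicate part of the write-up is to argue that whenever a queried midpoint returns a third action $j''$, the two regions are not adjacent along this segment, so the procedure may correctly report that $H_{j,j'}$ is not realized here (or redirect the search to a segment on which adjacency does hold). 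Making this case analysis precise, and verifying that such intermediate responses never push the total beyond the $O(L)$ query budget, is the technical heart of the argument.
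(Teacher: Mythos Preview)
The paper does not actually prove this lemma: it is quoted verbatim from \citet{letchford2009learning} and invoked as a black box inside the proof of Theorem~\ref{thm:infinite_type}. So there is no ``paper's own proof'' to compare against here. Your reconstruction---binary search on the segment $\x(t)=(1-t)\hat{\x}_j+t\hat{\x}_{j'}$, using convexity of best-response regions to guarantee a single transition, and the $O(L)$-bit rationality of the crossing parameter to bound the number of halvings---is exactly the argument in the cited source, and it is correct. Your caveat about a third action $j''$ appearing at a midpoint is also the right concern; in the original, this is handled by the qualifier ``if the separating hyperplane exists'' (adjacency along the probed segment), precisely as you interpret it.
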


We denote the principal strategy that is on the separating hyperplane between $j$ and $j'$ as $\hat{\x}_{j,j'}$. Note that a key property of $\hat{\x}_{j, j'}$ is that $V^\theta(\hat{\x}_{j, j'}, j) = V^\theta(\hat{\x}_{j, j'}, j')$, that is,
\begin{equation}\label{eq:hyperplane}
\lambda_j \langle \tilde{\v}_j, \hat{\x}_{j, j'} \rangle + c_j = \lambda_{j'} \langle \tilde{\v}_{j'}, \hat{\x}_{j, j'} \rangle + c_{j'}.
\end{equation}
where $\lambda_j, \lambda_{j'}, c_{j}$, and $c_{j'}$ are unknown variables. Thus, by sampling 4 different principal strategies $\hat{\x}_{j, j'}$ that are on the hyperplane, we can get 4 different equations in the format of \eqref{eq:hyperplane} and solve $\lambda_j, \lambda_{j'}, c_{j}$, and $c_{j'}$. With Lemma \ref{lem:letchford}, we can learn the values of $\lambda_j, \lambda_{j'}, c_{j}$, and $c_{j'}$ with the $O(4L)= O(L)$ samples, if there exists a separate hyperplane between the best response sub-regions of $j$ and $j'$.

Finally, there are at most $n$ convex regions since the agent has $n$ actions, there will be at most $\big(\begin{smallmatrix}n\\2\end{smallmatrix}\big)$ separating hyper-planes search as each region can share at most one hyperplane with the other region. Combining the hyper-planes search with Lemma \ref{lem:letchford} would directly imply the correctness of Theorem \ref{thm:infinite_type}.
\end{proof}

\hardnessExample*
\begin{proof}
Our proof is by construction, where we construct a menu that consists of a leader strategy from each type's best response region of $a^*$, i.e., $BR^\theta(a^*) = \{\x \in \xxx: \arg\max_{j\in [n]} V^\theta(\x, j) = a^*\}$. Specifically, for every type $\theta$ which corresponds to a specific set $\theta \subset [m]$ and $|\theta| = m/2$, we have any $\x$ in the best response region $BR^\theta(a^*) $ satisfy the following conditions \cite{peng2019learning}.
    \begin{equation*}
    \begin{split}
        \forall i \in \theta, x_i > \frac{2}{m} - \frac{m+2}{2N} - \frac{1}{N^2};  \,\,
        \forall i \notin \theta, x_i \le \frac{1}{N} + \frac{1}{N^2}.
    \end{split}
    \end{equation*}
    We propose to post a menu of strategies that consists of $\mmm=\{\x^\theta\}_{\theta \in \Theta}$ where $x^\theta_i = \frac{2}{m}$ if $i \in \theta$ and $0$ otherwise for every type $\theta \in \Theta$. 
    For follower type $\theta$ and for any leader strategy $\x^{\theta'} \in \mmm, \theta'\ne \theta$, we have 
    \begin{enumerate}
        \item $V^\theta(\x^{\theta'}, a_{m+1}) = -\frac{1}{N^3}$.
        \item $V^\theta(\x^{\theta'}, a_i) = -\frac{2}{m-2}$ if $i \notin \theta'$ and $V^\theta(\x^{\theta'}, a_i) =0$ otherwise.
        \item $V^\theta(\x^{\theta'}, a^*) \le max_{a \in \{a_i\}_{i\in [m+1]}}V^\theta(\x^{\theta'}, a)$ since $\x^{\theta'} \notin BR^\theta(a^*)$.
    \end{enumerate}
    On the other hand, we have $V^\theta(\x^\theta, a^*) = \frac{1}{N^2}$.  As a result, we got
    \begin{equation}\label{eq:effective_menu}
        V^\theta(\x^\theta, a^*) \geq \max_{\theta'} \max_{a} V^\theta(\x^{\theta'}, a)
    \end{equation}
    From equation \eqref{eq:effective_menu}, we can conclude that for every follower type $\theta$ with corresponding $|\theta| = m/2$, their optimal choice from the menu $\mathcal{X}=\{\x^\theta\}_{\theta \in \Theta}$ is $\x^\theta$ and then responds with $a^*$. However, if the follower chooses some leader strategy but responds with some $a \ne a^*$, we can conclude the follower's type is the type $\theta=\emptyset$.
\end{proof}

\end{document}